\def\wgt{\mathop{\rm wgt}}
\def\rank{\mathop{\rm rank}}
\def\Prob{\mathop{\rm Prob}}
\def\Ftwo{\mathbb{F}_2}
\newtheorem{theorem}{Theorem}
\newtheorem{lemma}[theorem]{Lemma}
\newtheorem{statement}[theorem]{Statement}
\newtheorem{definition}{Definition}
\begin{document}

\title{Duality
  and free energy analyticity bounds for few-body Ising models\\ with
  extensive homology rank}

\author{Yi Jiang}
\affiliation{Department of Physics \& Astronomy,
  University of California, Riverside, California 92521, USA}

\author{Ilya Dumer}
\affiliation{Department of Electrical Engineering, University of
  California, Riverside, California 92521, USA}

\author{Alexey A. Kovalev}
\affiliation{Department of Physics \&
  Astronomy and Nebraska Center for Materials and Nanoscience,
  University of Nebraska, Lincoln, Nebraska 68588, USA}

\author{Leonid P. Pryadko}
\email{leonid.pryadko@ucr.edu}
\affiliation{Department of Physics \& Astronomy,
  University of California, Riverside, California 92521, USA}
\date{\today}

\begin{abstract}
  We consider pairs of few-body Ising models where each spin enters a
  bounded number of interaction terms (bonds), such that each model
  can be obtained from the dual of the other after freezing $k$ spins
  on large-degree sites.  Such a pair of Ising models can be
  interpreted as a two-chain complex with $k$ being the rank of the
  first homology group.  Our focus is on the case where $k$ is
  extensive, that is, scales linearly with the number of bonds $n$.
  Flipping any of these additional spins introduces a homologically
  non-trivial defect (generalized domain wall).  In the presence of
  bond disorder, we prove the existence of a low-temperature
  weak-disorder region where additional summation over the defects
  have no effect on the free energy density $f(T)$ in the
  thermodynamical limit, and of a high-temperature region where in the
  ferromagnetic case an extensive homological defect does not affect
  $f(T)$.  We also discuss the convergence of the high- and
  low-temperature series for the free energy density, prove the
  analyticity of limiting $f(T)$ at high and low temperatures, and
  construct inequalities for the critical point(s) where analyticity
  is lost.  As an application, we prove multiplicity of the
  conventionally defined critical points for Ising models on all
  $\{f,d\}$ tilings of the infinite hyperbolic plane, where
  $df/(d+f)>2$.  Namely, for these infinite graphs, we show that
  critical temperatures with free and wired boundary conditions
  differ, $T_c^{(\mathrm{f})}<T_c^{(\mathrm{w})}$.
\end{abstract}

\pacs{03.67.Lx, 03.67.Pp,  64.60.ah}

\maketitle

\section{Introduction}
Singular behavior associated with a phase transition may emerge only
in the thermodynamical limit, as the system size goes to infinity.
One example are spin models on any finite-dimensional lattice, where
both the interaction strength and its range are finite.  Then the
thermodynamical limit is well defined thanks to the fact that boundary
contribution scales sublinearly with the system
size\cite{Griffiths-results-1972}.  Respectively, e.g., in the case of
an Ising model, the same transition can be alternatively defined as
the temperature where spontaneous magnetization appears, spin
susceptibility diverges, spin correlations start to decay
exponentially, domain wall tension is lost, or as a singular point of
the free energy\cite{Lebowitz-1977,Gruber-Lebowitz-1978,%
  Lebowitz-Pfister-1981,Griffiths-results-1972}.

Situation is different if we are interested in non-local models, where
the size of the boundary scales linearly with the size of the subset
induced by any finite set of vertices.  Examples include models on
infinite transitive expander graphs like a degree-regular
tree\cite{Lyons-1989,Lyons-review-2000} or regular $\{f,d\}$ tilings
of the hyperbolic plane, with $(f-2)(d-2)>2$.  Here, the bulk
quantities cannot be uniquely defined, and the transition temperature
may depend on both the quantity being probed and the boundary
conditions used to define the infinite-graph limit.  

Infinite hyperbolic tilings 
provide a natural short-scale regularization for a space with constant
negative curvature.  Interest in quantum field theory models on curved
space-time is motivated by quantum gravity and, in particular, the
AdS/CFT correspondence\cite{Birrel-Davies-book-1982,%
  Cognola-Kirsten-Zerbini-1993,Camporesi-1991,%
  Miele-Vitale-1997,Doyon-2003,Evenbly-Vidal-2011,%
  Hiroaki-Ishihara-Hashizume-2013}.  There is an independent interest
in models on curved spaces in statistical mechanics and condensed
matter communities, e.g., since curvature can serve as an additional
parameter to drive the criticality, or as a way to introduce
geometrical frustration in toy models of amorphous solids, supercooled
liquids, and metallic
glasses\cite{Nelson-1983,Tarjus-Kivelson-Nussinov-Viot-2005,%
  Vitelli-Lucks-Nelson-2006,Sausset-Tarjus-2007,%
  Giomi-Bowick-2007,Turner-Vitelli-Nelson-2010,%
  Garcia-etal-2015,Benedetti-2015}.  Models like percolation on more
general expander graphs and various random graph ensembles are also
common in network theory, e.g., such models occurred in relation to
internet stability and spread of infectious
diseases\cite{Cohen-Erez-benAvraham-Havlin-2000,%
  Albert-Barabasi-RMP-2002,Gai-Kapadia-2010,%
  Borner-Sanyal-Vspignani-ARIST-2007,%
  Sander-epidemics-2002,Costa-Oliveira-CorreaRocha-2011,%
  Danon-etal-2011}.  Finally, the strongest motivation to study
non-local Ising models comes from their
relation\cite{Dennis-Kitaev-Landahl-Preskill-2002,%
  Kovalev-Pryadko-SG-2015,Kovalev-Prabhakar-Dumer-Pryadko-2018} to
certain families of finite-rate quantum 
error-correcting codes (QECCs).  

In a companion paper\cite{Kovalev-Prabhakar-Dumer-Pryadko-2018}
devoted to error-correcting properties of QECCs, three of us studied
pairs of \emph{weakly-dual} few-body Ising models where each spin
enters a bounded number of interaction terms (bonds).  Each model can
be obtained from the exact dual of the other after freezing $k$ spins
which enter a large number of bonds.  For the related QECC, $k$ is the
number of encoded qubits, and its ratio to the number of bonds,
$R\equiv k/n$, is the code rate.  One can also map such a pair of
Ising models to a $2$-chain complex $\Sigma$, in which case $k$ is the
rank of the first homology group $H_1(\Sigma)$.  In particular, in
Ref.~\onlinecite{Kovalev-Prabhakar-Dumer-Pryadko-2018} we introduced
the \emph{homological difference} $\Delta F\ge0$, the difference of
the free energies of two models with and without the additional
summation over the homological defects, and gave the sufficient
conditions for the existence of a low-temperature low-disorder region
on the phase diagram where in the large-system limit $\Delta F=0$.

In the present work we study duality and phase transitions in general
Ising models with the help of the \emph{specific homological
  difference} scaled by the number of bonds, $\Delta f=\Delta F/n$,
focusing on the case where the homology rank $k$ scales linearly with
the number of bonds $n$.  Upon duality $\Delta f$ is mapped to
$R\ln 2-\Delta f^*$, where $\Delta f^*$ is the homological difference
for the other model in the pair, at the dual temperature.  Existence
of a low-temperature \emph{homological} region where asymptotically
$\Delta f=0$ implies that at high temperatures $\Delta f^*=R\ln2$;
with $R>0$ this implies the existence of at least two distinct points
where $\Delta f$ is non-analytic as a function of temperature.
Combining with the analysis of convergence of the high-temperature
series expansion for the free energy density, we obtain several bounds
for critical temperatures associated with the non-analyticity of
$\Delta f$ and limiting free energy densities of the two models.  Main
result is the inequality for the change of phase transition
temperature due to summation over the homological defects.  As an
application, we prove multiplicity of the conventionally defined
critical points for Ising models on all $\{f,d\}$ tilings of the
hyperbolic plane with $df/(d+f)>2$.  That is, we show that transition
temperatures with wired and free boundary conditions differ,
$T_c^{(\mathrm{w})}>T_c^{(\mathrm{f})}$, which extends the results of
Refs.\
\onlinecite{Wu-hyperb-2000,Schonmann-2001,Haggstrom-Jonasson-Lyons-2002}.

The paper is organized as follows.  We introduce the notations and
review some known facts from theory of general Ising models and theory
of QECCs in Sec.\ \ref{sec:background}.  Our results are given in
Sec.\ \ref{sec:results}, where we first discuss properties of the
homological difference $\Delta f$, analyze the convergence and
analyticity of free energy density for a sequence of weakly-dual Ising
model pairs, and finally apply the obtained results to Ising models on
$\{f,d\}$ tilings of the hyperbolic plane, additionally illustrating
the conclusions with numerical simulations.  We summarize the results
and list some related open questions in Sec.\ \ref{sec:discussion}.
Most of the proofs are given in the Appendices.

\section{Notations and background}
\label{sec:background}  
We  consider general
Ising models in Wegner's form\cite{Wegner-1971}, which describes joint
probability distribution of $r\equiv |{\cal V}|$ Ising spin variables,
$S_v\in\{-1,1\}$, associated with elements of the vertex set, ${\cal V}$,
\begin{equation}
  \label{eq:prob-distribution}
  \Prob\nolimits_\mathbf{e}[\{S\};\Theta;K, h]={1\over Z}\prod_{b\in\mathcal{B}}
  e^{K (-1)^{e_b} R_b}\prod_{v\in\mathcal{V}}e^{h S_v},
\end{equation}
where each bond $R_b\equiv \prod_{v\in{\cal V}} S_v^{\Theta_{vb}}$,
$b\in{\cal B}$, $|{\cal B}|=n$, is a product of the spin variables
corresponding to non-zero positions in the corresponding column of the
$r\times n$ binary coupling matrix $\Theta$, the binary ``error'' 
vector $\mathbf{e}$ with components $e_b$, $b\in{\cal B}$, describes
quenched disorder, and the dimensionless coupling coefficients are
$K\equiv \beta J$ and $h\equiv \beta h'$, where $J$ is the Ising
exchange constant, $h'$ is the magnetic field, and $\beta\equiv 1/T$
the inverse temperature in energy units.  The normalization
constant 
in Eq.~(\ref{eq:prob-distribution}) is the \emph{partition function},%
\begin{equation}
  \label{eq:Z}
Z\equiv  Z_{\mathbf{e}}(\Theta;K,  h)\equiv \sum_{\{
    S_v= \pm1\}}\prod_{b\in{\cal B}} 
  e^{K (-1)^{e_b} R_b}\prod_{v\in\mathcal{V}} e^{ h S_v}.  
\end{equation}
The partition function is commonly written in terms of the
corresponding logarithm, the \emph{free energy}, $F=-\ln Z$, or the
free energy density (per bond), $f=F/n$.

The binary coupling matrix $\Theta$ in
Eq.~(\ref{eq:prob-distribution}) can be interpreted geometrically in
terms of a bipartite \emph{Tanner} graph\cite{Tanner-graph-1981}, or,
equivalently, as the vertex-edge incidence matrix for a hypergraph
$\mathcal{H}=(\mathcal{V},\mathcal{B})$ with vertex set $\mathcal{V}$
and hyperedge (bond) set $\mathcal{B}$, with each hyperedge
$b\in\mathcal{B}$ a non-empty subset of the vertex set,
$b\subseteq \mathcal{V}$.  In comparison, in a (simple undirected)
graph $\mathcal{G}=(\mathcal{V},\mathcal{E})$, each edge
$b\in\mathcal{E}$ is an unordered pair of vertices,
$b=\{i,j\}\subseteq\mathcal{V}$.  The \emph{degree} $d_v$ of a vertex
$v\in\mathcal{V}$ in a (hyper)graph is the number of edges that
contain $v$, it is equal to the number of non-zero entries in the
$v$\,th row of the vertex-edge incidence matrix $\Theta$.  Similarly,
the size of an edge in a hypergraph is called its degree, $d_b=|b|$,
$b\in\mathcal{B}$.  In a graph, all edges are pairs of vertices, and
all columns of the incidence matrix $\Theta$ have exactly two non-zero
entries.

The probability distribution (\ref{eq:prob-distribution}) can be
characterized via the
corresponding marginals, the spin correlations
\begin{equation}
  \label{eq:marginal}
  \langle S_{\cal A}\rangle
  \equiv 
  \sum_{\{S_v=\pm1\}}S_{\cal A}\Prob\nolimits_\mathbf{e}(\{S\};\Theta;K,h), 
\end{equation}
where ${\cal A}\subseteq{\cal V}$ is a set of vertices,
$S_\mathcal{A}=\prod_{v\in\mathcal{A}}S_v$; by convention,
$ S_\emptyset=1$.  At $h=0$, on a finite system and with
$\mathbf{e}=\mathbf{0}$, non-zero expectation is obtained for the sets
(and only the sets) that can be constructed as products of
bonds\cite{Wegner-1971},
\begin{equation}
S_{\cal A}=\prod_{b:m_b\neq0} R_b=\prod_v \prod_b
S_v^{\Theta_{vb}m_b},\label{eq:bond-product}
\end{equation}
where bonds in the product correspond to non-zero positions
$m_b\neq0$ in the binary vector $\mathbf{m}\in\Ftwo^n$ of
\emph{magnetic charges}.  A number of \emph{correlation
  inequalities} for spin averages have been constructed, see, e.g.,
Refs.~\onlinecite{Percus-1975,Shlosman-1981}.
Particularly important for this work are Griffiths-Kelly-Sherman (GKS)
inequalities\cite{Griffiths-1967,Kelly-Sherman-1968},%
\begin{eqnarray}
  \label{eq:gks-ineq-one}
  \langle S_{\cal A}\rangle&\ge&0,\\
  \langle S_{\cal A}S_{\cal B}\rangle&\ge& \langle S_{\cal A}\rangle\langle
  S_{\cal B}\rangle,
  \label{eq:gks-ineq-two}  
\end{eqnarray}
valid in the ferromagnetic case, $\mathbf{e}=\mathbf{0}$, for any 
${\cal A}, {\cal B}\subseteq{\cal V}$.  

The second GKS inequality (\ref{eq:gks-ineq-two}) can also be
written\cite{Griffiths-1967,Kelly-Sherman-1968} in terms of the
derivative of $\langle S_{\cal A}\rangle$ with respect to
$K_\mathcal{B}$, the dimensional coupling constant corresponding to
the product of spins $S_\mathcal{B}$,
\begin{equation}
  \label{eq:gks-ineq-one-prime}
{  d\langle S_\mathcal{A}\rangle\over d K_\mathcal{B}}\ge 0.
\end{equation}
This implies the monotonicity of any average with respect to all
coupling constants and, as a consequence, the existence of two
extremal Gibbs states describing (generally different)
thermodynamical limit(s) for the Ising model on an infinite hypergraph
$\mathcal{H}=(\mathcal{V},\mathcal{B})$, with free and wired boundary
conditions, respectively.  Namely, one considers an increasing
sequence $\mathcal{V}_t$, $t\in\mathbb{N}$, of sets of vertices,
$\mathcal{V}_t\subsetneq \mathcal{V}_{t+1}\subset
\mathcal{\mathcal{V}}$ which converges weakly to
$\mathcal{V}=\cup_{t\in\mathbb{N}}\mathcal{V}_t$, and the sequence of
sub-hypergraphs $\mathcal{H}_t=(\mathcal{V}_t,\mathcal{B}_t)$ induced
by the sets $\mathcal{V}_t$.  For each $\mathcal{H}_t$, consider also
the hypergraph $\mathcal{H}_t'=(\mathcal{V}_t',\mathcal{B}_t')$,
obtained from $\mathcal{H}$ by contracting all vertices outside
$\mathcal{V}_t$ into one.  Denote the vertex-edge incidence matrices
of $\mathcal{H}_t$ and $\mathcal{H}_t'$ as $\Theta_t^\mathrm{f}$ and
$\Theta_t^\mathrm{w}$, respectively. Here ``f'' and ``w'' stand for
``free'' and ``wired'' boundary conditions in the Ising models
(\ref{eq:prob-distribution}) defined with the help of these matrices.
Clearly, $\mathcal{H}_t$ can be obtained from $\mathcal{H}_{t+1}$ by
reducing some couplings to zero, while $\mathcal{H}_t'$ can be obtained
from $\mathcal{H}_{t+1}'$ by increasing some couplings to infinity.
This implies that for any set of vertices
$\mathcal{A}\subset \mathcal{V}$, and $t$ large enough so that
$\mathcal{A}\subset\mathcal{V}_t$, the averages
$\langle S_\mathcal{A}\rangle_t^\mathrm{f}\le\langle
S_\mathcal{A}\rangle_t^\mathrm{w}$ are, respectively, non-decreasing
and non-increasing with $t$.  They are also bounded, which proves the
existence of the corresponding pointwise limits,
$\langle S_\mathcal{A}\rangle^\mathrm{f}\le\langle
S_\mathcal{A}\rangle^\mathrm{w}$ at any $K$ and $h$.

The two limits are known to coincide \cite{Griffiths-results-1972} for
degree-limited graphs embeddable in $D$-dimensional space, e.g., the
hypercubic lattice $\mathbb{Z}^D$.  Indeed, the increasing sequence of
subgraphs $\mathcal{G}_t=(\mathcal{V}_t,\mathcal{E}_t)$ can be chosen
so that the boundary grows sublinearly with the total number of spins
$|\mathcal{V}_t|$.  Such a property is violated in the case of a
non-amenable graph $\mathcal{G}$, which has a non-zero edge expansion
(Cheeger) constant, $ \iota_E(\mathcal{G})>0$, defined as
\begin{equation}
  \label{eq:cheeger-constant}
  \iota_E(\mathcal{G})\equiv \sup_{\mathcal{W}\subset \mathcal{V}:
    |\mathcal{W}|<\infty}\frac{|\partial_E\mathcal{W}|}{|\mathcal{W}|}, 
\end{equation}
where $\partial_E(\mathcal{W})$ is the set of edges connecting
$\mathcal{W}$ with its complement, $\mathcal{V}\setminus\mathcal{W}$.
The dependence of the critical temperatures (as seen by the
magnetization) on the boundary conditions,
$T_c^\mathrm{w}> T_c^\mathrm{f}$, where the superscripts stand for
``wired'' and ``free'' boundary conditions, respectively, is called
the ``multiplicity'' of critical
points\cite{Wu-hyperb-2000,Schonmann-2001,Haggstrom-Jonasson-Lyons-2002}.
Examples are the infinite $d$-regular trees $\mathcal{T}_d$ (in this
case $T_c^\mathrm{f}=0$, $T_c^\mathrm{w}=(d-1)^{-1}$, see, e.g.,
Ref.~\onlinecite{Lyons-1989}), and the regular $\{f,d\}$ tilings
${\bf H}(f,d)$ of the infinite hyperbolic plane, $df/(f+d)>2$,
where in each vertex $d$ regular $f$-gons meet.  In the latter case
multiplicity of the critical points have been demonstrated for
self-dual graphs, $d=f$, and for graphs with large enough curvature
\cite{Wu-hyperb-2000,Schonmann-2001,Haggstrom-Jonasson-Lyons-2002}.
In Sec.~\ref{sec:hyperbolic} we prove the multiplicity of critical
points for all hyperbolic tilings ${\bf H}(f,d)$ with
$df/(d+f)>2$.

Another important result for the Ising model
(\ref{eq:prob-distribution}) is the duality
transformation\cite{Kramers-Wannier-1941,Wegner-1971}.  In particular,
in the absence of bond disorder, $\mathbf{e}=0$, and at $h=0$, one has
\begin{equation}
  Z_{\mathbf{0}}(\Theta;K)  =
  Z_{\mathbf{0}}(\Theta^*;K^*)\,  2^{r-n_g^*}
\left(\sinh K \cosh K\right)^{n/2} ,
  \label{eq:duality}
\end{equation}
where $K^*$ is the Kramers-Wannier dual of $K$, namely
$\tanh {K}^*=e^{-2 {K}}$, the degeneracy
$n_g^*=r^*-\rank\Theta^*$ ($2^{n_g^*}$ is the number of distinct
ground-state spin configurations in the dual representation), and
$\Theta^*$ is a binary $r^*\times n$ matrix exactly dual to
$\Theta$ (binary rank is used),
\begin{equation}
  \label{eq:matrix-duality}
  \Theta^*\Theta^T=0,\quad \rank \Theta+\rank \Theta^*=n . 
\end{equation}

Notice that in Eq.~(\ref{eq:duality}), and
elsewhere in this work, we simplify the notations by suppressing the
argument corresponding to a zero magnetic field, $h=0$.

Exact duality also works in the presence of sign bond disorder, except
the 
corresponding bonds (``electric charges'') are mapped by duality to
extra factors in front of the exponent, ``magnetic charges''.  The
resulting expression is not positive-definite and thus cannot be
interpreted as a probability measure; instead it is proportional to
the average of a product of the corresponding bonds.  The duality in
this case reads\cite{Wegner-1971}
\begin{equation} {Z_\mathbf{e}(\Theta;K)\over Z_\mathbf{0}(\Theta;K)}=
  \left\langle   \prod_{b\in\mathcal{B}}R_b^{e_b}
  \right\rangle_{\Theta^*;K^*}, \label{eq:em-duality}
\end{equation}
where the average on the right is computed in the dual model with all
bonds ferromagnetic, cf.\ Eq.\ (\ref{eq:bond-product}).

There is a natural notion of equivalence between defects $\mathbf{e}$
that produce identical averages in Eq.~(\ref{eq:em-duality}).  For the
electric charges in the l.h.s., equivalent are any two defects which
differ by a linear combination of rows of $\Theta$,
$\mathbf{e}\simeq\mathbf{e}'=\mathbf{e}+\boldsymbol\alpha \Theta$,
where $\boldsymbol\alpha$ is a length-$r$ binary vector.  Such defects
are related by Nishimori's spin-glass \emph{gauge
  symmetry}\cite{Nishimori-book} generated by local spin flips
$\alpha_v\in\Ftwo$, $v\in\mathcal{V}$, and simultaneous update of the
components of $\mathbf{e}$ on the adjacent bonds,%
\begin{equation}
  \label{eq:gauge-transformation}
  S_v\to (-1)^{\alpha_v}S_v,\quad e_b\to e_b'\equiv
  e_b+\sum\nolimits_v\alpha_v\Theta_{vb}. 
\end{equation}
For such a defect $\mathbf{e}$, it is convenient to introduce an
invariant distance $d_\mathbf{e}$, the minimum number of flipped bonds
among all defects in the same equivalence class,
\begin{equation}
  d_\mathbf{e}\equiv d_\mathbf{e}(\Theta)=\min_{\boldsymbol\alpha}
  \wgt(\mathbf{e}+{\boldsymbol\alpha}\Theta),\label{eq:defect-distance} 
\end{equation}
where $\wgt(\mathbf{e})$ is the Hamming weight.
An identical equivalence relation for the magnetic charges which
define the product of spins in the r.h.s.\ of
Eq.~(\ref{eq:em-duality}) can be interpreted as a result of
introducing a product of (dual) bonds that form a cycle, i.e., does not
change the spins that actually enter the average.

For a finite system and a finite $K>0$, both sides of
Eq.~(\ref{eq:em-duality}) are strictly positive.  The logarithm of the
l.h.s.\ is proportional to the free energy increment due to the
addition of the defect,
\begin{equation}
\delta_\mathbf{e}\equiv \delta_\mathbf{e}(\Theta;K)\equiv \ln
Z_\mathbf{0}(\Theta;K)-\ln Z_\mathbf{e}(\Theta;K);\label{eq:defect-delta}
\end{equation}
in turn, it is proportional to dimensionless 
 defect tension%
\begin{equation}
  \label{eq:defect-tension}
  \tau_\mathbf{e}\equiv  \tau_\mathbf{e}(\Theta;K)
  \equiv \delta_\mathbf{e}(\Theta;K)/d_\mathbf{e}.
\end{equation}
Respectively, the scaling of the spin average in the r.h.s.\ of
Eq.~(\ref{eq:em-duality}) with the minimum number of bonds in the
product is called the \emph{area-law} exponent,
\begin{equation}
  \alpha_\mathbf{e}\equiv \alpha_\mathbf{e}(\Theta^*;K^*)
  =-d_\mathbf{e}^{-1}\ln \left\langle
    \prod_{b\in\mathcal{B}}R_b^{e_b}\right\rangle_{\Theta^*;K^*}.
  \label{eq:area-law} 
\end{equation}
Second GKS inequality (\ref{eq:gks-ineq-two}) implies subadditivity, 
\begin{equation}
  \label{eq:alpha-subadditivity}
    d_{\mathbf{e}_1+\mathbf{e}_2}\alpha_{\mathbf{e}_1+\mathbf{e}_2}\le
    {d_{\mathbf{e}_1}\alpha_{\mathbf{e}_1}+d_{\mathbf{e}_2}\alpha_{\mathbf{e}_2}}.
\end{equation}
Thus electric-magnetic duality (\ref{eq:em-duality}) also implies an
exact relation between the defect tension and area-law exponent in a
pair of dual models,
\begin{equation}
  \label{eq:em-duality-tension}
  \tau_\mathbf{e}(\Theta;K)=\alpha_\mathbf{e}(\Theta^*;K^*).
\end{equation}
Combined with Eq.~(\ref{eq:alpha-subadditivity}), this implies
subadditivity for defect free energy cost
\begin{equation}
  \label{eq:delta-subadittivity}
    \delta_{\mathbf{e}_1+\mathbf{e}_2}\le
    \delta_{\mathbf{e}_1}+\delta_{\mathbf{e}_2}.
\end{equation}
In the special case of a model with two-body couplings defined on a
graph ${\cal G}=({\cal V},{\cal E})$, a single correlation decay
exponent can be defined in terms of pair correlations,
\begin{equation}
\alpha\equiv\alpha(\mathcal{G};K)=
\inf_{i,j\in\mathcal{V}}\left[-{\ln\langle
S_iS_j\rangle\over d_{ij}}\right],\label{eq:asymptotic-alpha} 
\end{equation}
where $d_{ij}$ is the graph distance between  $i$ and $j$.
Subadditivity (\ref{eq:alpha-subadditivity}) implies that the value of
$\alpha$ corresponds to that for pairs with large $d_{ij}$.

We are interested in the Ising models (\ref{eq:prob-distribution})
with few-body couplings.
More specifically, we consider \emph{weight-limited} Ising models
with vertex and bond degrees bounded by some fixed $\ell$ and $m$,
respectively, so that $d_v\le m$, $v\in{\cal V}$, and $d_b\le \ell$,
$b\in{\cal B}$.  With fixed $\ell$ and $m$, we call such a model
$(\ell,m)$-sparse.  This refers to the sparsity of the corresponding
coupling matrix $\Theta$: $\ell$ and $m$, respectively, are the
maximum weights of a column and of a row of $\Theta$.

Further, we would like to consider models whose duals are in the same
class of weight-limited Ising models, with some maximum vertex,
$\ell^*$, and bond, $m^*$, degrees.  However, such a condition would
be very restrictive if one insists on the exact
duality~(\ref{eq:matrix-duality}).  For example, in the case of the
square-lattice Ising model with periodic boundary conditions on an
$L\times L$ square ($\ell=2$ and $m=4$), the dual model can be chosen
to have the same vertex and bond degrees, $\ell^*=2$ and $m^*=4$,
except for $k=2$ additional summations over periodic/antiperiodic
boundary conditions in each direction.  These summations can be
introduced as additional spins entering $d_v\ge L$ bonds, where the
lower bound is the length of the shortest domain wall on the
$L\times L$ square-lattice tiling of a torus.  The two
additional summations give no contribution to the asymptotic free
energy density at $L\to\infty$, both in the low- and high-temperature
phases, and are often ignored.

Such a \emph{weak} duality with additional defects for models on
locally planar graphs can be generalized by considering a pair of
weight-limited binary matrices with $n$ columns each, $G$ and $H$,
such that their rows be mutually orthogonal, $G\,H^T=0$.  Since we do
not require exact duality (\ref{eq:matrix-duality}), there are exactly
\begin{equation}
k\equiv n-\rank G-\rank H
\label{eq:CSS-k}
\end{equation}
distinct \emph{defect} vectors $\mathbf{c}_i\in\mathbb{F}_2^n$,
$i\in\{1,\ldots,k\}$, which are orthogonal to rows of $H$ and whose
non-trivial linear combinations are linearly-independent from rows of
$G$.

Just as for the spin glasses on locally planar graphs, the matrix $H$
can be used to define \emph{frustration},
$\mathbf{s}\equiv \mathbf{e} \,H^T$, a gauge-invariant
characteristic of bond disorder.
As common in spin-glass theory\cite{Nishimori-book}, we will consider
independent identically-distributed (i.i.d.)\ components of the
quenched disorder vector $\mathbf{e}$, such that $e_b=1$ with
probability $p$.  The corresponding averages are denoted with
square brackets, $[\,\boldsymbol\cdot\,]_p$.

In theory of quantum error correcting
codes\cite{gottesman-thesis,Calderbank-1997,Nielsen-book,preskill-book},
a pair of binary matrices with orthogonal rows, $G \,H^T=0$, can be
used to define a
Calderbank-Shor-Steane\cite{Calderbank-Shor-1996,Steane-1996} (CSS)
stabilizer code ${\cal Q}(G,H)$ which encodes $k$ qubits in $n$, see
Eq.~(\ref{eq:CSS-k}).  Such a quantum code has a convenient
representation in terms of \emph{classical} binary codes.  Given a
matrix $G$ with $n$ columns, one defines the classical code
$\mathcal{C}_G\subseteq\Ftwo^n$, a linear space of dimension $\rank G$
generated by the rows of $G$.  One also defines the corresponding
\emph{dual} code $\mathcal{C}_G^\perp$ of all vectors in
$\mathbb{F}_2^n$ orthogonal to rows of $G$; such a code is generated
by the corresponding dual matrix (\ref{eq:matrix-duality}),
$\mathcal{C}_G^\perp\equiv\mathcal{C}_{G^*}$.  By orthogonality, we
necessarily have $\mathcal{C}_H\subseteq \mathcal{C}_G^\perp$ and
$\mathcal{C}_G\subseteq \mathcal{C}_H^\perp$, where equality is
achieved when the two matrices are exact dual of each other, in which
case $k=0$.  The defect vectors $\mathbf{c}$ are non-zero CSS
\emph{codewords} of $G$ type,
$\mathbf{c}\in\mathcal{C}_H^\perp\setminus \mathcal{C}_G$; there are
exactly $2^k-1$ inequivalent (mutually
\emph{non-degenerate}\cite{Calderbank-1997}) vectors of this type.
Similarly, there are also $2^k-1$ inequivalent $H$-type vectors
$\mathbf{b}$ in $\mathcal{C}_G^\perp\setminus \mathcal{C}_H$, where
equivalence is defined in terms of rows of $H$,
$\mathbf{b}'\simeq \mathbf{b}$ iff
$\mathbf{b}'=\mathbf{b}+\alpha^T H$.  For any quantum code, important
parameters are its rate, $R\equiv k/n$, and the distance,
$d\equiv \min (d_G, d_H)$,
\begin{equation}
d_G\equiv \min_{\mathbf{c}\in \mathcal{C}_H^\perp\setminus
  \mathcal{C}_G}
\wgt(\mathbf{c}),\quad
d_H\equiv \min_{\mathbf{b}\in \mathcal{C}_G^\perp\setminus
  \mathcal{C}_H}
\wgt(\mathbf{b}).
\label{eq:CSS-dist}
\end{equation}

As yet another interpretation of the algebraic structure in the pair
of weakly-dual Ising models with vertex-bond incidence matrices $G$
and $H$ of dimensions $r\times n$ and $r'\times n$, respectively,
consider a two-chain complex $\Sigma\equiv \Sigma(G,H)$,%
\begin{equation}
\Sigma:\;C_2\equiv\Ftwo^{r'}\stackrel{\partial_2}\rightarrow
C_1\equiv \Ftwo^n\stackrel{\partial_1}\rightarrow
C_0\equiv\Ftwo^r,\label{eq:chain-complex}  
\end{equation}
where the modules $C_j$, $j\in\{0,1,2\}$ are the linear spaces of
binary vectors with dimensions $r$, $n$, and $r'$, respectively, and
the boundary operators $\partial_1$ and $\partial_2$ are two linear
maps defined by the matrices $G$ and $H^T$.  The required composition
property, $\partial_1\circ \partial_2=0$, is guaranteed by the
orthogonality between the rows of $G$ and $H$. The number of
independent defect vectors (\ref{eq:CSS-k}) is exactly the rank of the
first homology group $H_1(\Sigma)$.

\section{Results}
\label{sec:results}

\subsection{Properties of specific homological difference}
\label{sec:general}

We first quantify the effect of homological defects on the properties
of general Ising models.  To this end, given a pair of binary matrices
$G$ and $H$ with $n$ columns each and mutually orthogonal rows,
$GH^T=0$, consider the specific homological difference
\cite{Kovalev-Prabhakar-Dumer-Pryadko-2018} (per bond),
\begin{eqnarray}
  \nonumber
  \Delta f_{\mathbf{e}}
  &\equiv&   \Delta f_{\mathbf{e}}(G,H;K)\\
  &  =&{1\over n}\left\{\ln Z_{\mathbf{e}}(H^*;K)-\ln
        Z_{\mathbf{e}}(G;K)\right\},\quad
  \label{eq:homo-difference}
\end{eqnarray}
where, to fix the normalization, the dual matrix $H^*$ [see
Eq.~(\ref{eq:matrix-duality})] is constructed from $G$ by adding
exactly $k$ row vectors\footnote{Notice that any other construction of
  the dual matrix would at most change the partition function
  multiplicatively by a power of two.}, linearly-independent
inequivalent codewords
$\mathbf{c}\in{\cal C}_H^\perp\setminus{\cal C}_G$.  This quantity
satisfies the inequalities\cite{Kovalev-Prabhakar-Dumer-Pryadko-2018}
\begin{equation}
\begin{aligned}
  0\le \Delta f_\mathbf{0}(G,H;K)&\le \Delta f_\mathbf{e}(G,H;K),\\
  \Delta f_\mathbf{0}(G,H;K)&\le R\ln2,
\end{aligned}\label{eq:homo-generic-bounds}
\end{equation}
where $R\equiv k/n$, and $k$ is the homology rank given by
Eq.~(\ref{eq:CSS-k}).  The lower and the upper bounds are saturated,
respectively, in the limits of zero and infinite temperatures.  In
addition, in the absence of disorder, the specific homological
difference is a non-increasing function of $K$ (and non-decreasing
function of $T=J/K$),
\begin{equation}
  \label{eq:monotonicity}
  {d\over dK} \Delta f_\mathbf{0}(G,H;K)\le0.
\end{equation}

Our
starting point is the following Theorem (related to Theorem 2 in
Ref.~\onlinecite{Kovalev-Prabhakar-Dumer-Pryadko-2018}), proved in
Appendix \ref{app:ths-convergence-proof}:
\begin{restatable}{theorem}{ltsconvergence}
  \label{th:lts-convergence}
  Consider a sequence of pairs of weakly dual Ising models defined by
  pairs of finite binary matrices with mutually orthogonal rows,
  $G_tH_t^T=0$, $t\in\mathbb{N}$, where row weights of each $H_t$ do
  not exceed a fixed $m$.  In addition, assume that the sequence of
  the CSS distances $d_{G_t}$ is increasing.  Then the sequence
  $\Delta f_t\equiv [\Delta f_\mathbf{e}(G_t,H_t;K)]_p$,
  $t\in\mathbb{N}$, converges to zero in the region
  \begin{equation}
(m-1)[e^{-2K}(1-p)+e^{2K}p]<1.\label{eq:low-T-low-p-region}
\end{equation}
\end{restatable}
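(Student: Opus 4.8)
The plan is to reduce $\Delta f_{\mathbf e}(G_t,H_t;K)$ to a sum over the $2^{k}-1$ nontrivial homology classes, to rewrite each term as a ratio of two low-temperature (equivalently, dual high-temperature) expansions, and then to bound the disorder average of that ratio by a convergent polymer expansion whose radius of convergence is exactly the region (\ref{eq:low-T-low-p-region}).

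First I would note that adjoining to $G_t$ the $k$ rows $\mathbf c_i$ amounts, in the partition function, to an extra sum over the $2^{k}$ sign assignments of the corresponding large-degree spins, so $Z_{\mathbf e}(H_t^*;K)=\sum_{\mathbf a\in\Ftwo^{k}}Z_{\mathbf e+\mathbf c(\mathbf a)}(G_t;K)$ with $\mathbf c(\mathbf a)=\sum_i a_i\mathbf c_i$. Dividing by $Z_{\mathbf e}(G_t;K)$ and using $\ln(1+x)\le x$,
\begin{equation}
  \Delta f_{\mathbf e}(G_t,H_t;K)\le \frac1n\sum_{\mathbf a\neq0}\frac{Z_{\mathbf e+\mathbf c(\mathbf a)}(G_t;K)}{Z_{\mathbf e}(G_t;K)} .
\end{equation}
Expanding each $Z$ about its (disorder-dependent) ground manifold --- i.e.\ using the low-temperature expansion, which is the Kramers--Wannier dual of the high-temperature one --- gives $Z_{\mathbf e}(G_t;K)\propto\sum_{\mathbf z\in\mathcal C_{G_t}}e^{-2K\wgt(\mathbf e+\mathbf z)}$ with a prefactor independent of $\mathbf e$. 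Since $\wgt(\mathbf e+\mathbf x)=\wgt(\mathbf e)+\sum_{b\in\mathrm{supp}\,\mathbf x}(1-2e_b)$, that prefactor cancels in the ratio and
\begin{equation}
  \frac{Z_{\mathbf e+\mathbf c}(G_t;K)}{Z_{\mathbf e}(G_t;K)}
  =\frac{\sum_{\mathbf z\in\mathcal C_{G_t}}\prod_{b\in\mathrm{supp}(\mathbf c+\mathbf z)}\nu_b}
        {\sum_{\mathbf z\in\mathcal C_{G_t}}\prod_{b\in\mathrm{supp}\,\mathbf z}\nu_b},
  \qquad \nu_b\equiv e^{-2K(1-2e_b)} ,
\end{equation}
so that summing over $\mathbf a$ collapses the numerators into a sum over $\mathcal C_{H_t}^\perp=\mathcal C_{H_t^*}\supseteq\mathcal C_{G_t}$. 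Because $\nu_b=e^{-2K}$ or $e^{2K}$ according to $e_b=0$ or $1$, and the $e_b$ are independent, the disorder average turns $\prod_{b\in S}\nu_b$ into $\nu^{|S|}$ with $\nu\equiv(1-p)e^{-2K}+pe^{2K}$, i.e.\ the quantity controlling (\ref{eq:low-T-low-p-region}).

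The heart of the argument is the resulting estimate on the disorder-averaged residual sum over homologically nontrivial configurations $\mathbf w\in\mathcal C_{H_t}^\perp\setminus\mathcal C_{G_t}$. I would organise both the numerator and the denominator as polymer (cluster) expansions over $H_t$-connected even subgraphs; a Peierls/tree-growth count that uses only the bound $m$ on the row weights of $H_t$ --- one grows a connected even subgraph check by check, each step branching into at most $m-1$ new bonds --- bounds the number of connected weight-$w$ polymers through a fixed bond by $(m-1)^{w-1}$, so that polymer activities are geometrically summable precisely when $(m-1)\nu<1$. The homologically trivial polymers, common to numerator and denominator, cancel; what survives is a sum over configurations containing at least one nontrivial polymer, and by the definition (\ref{eq:CSS-dist}) of $d_{G_t}$ every such polymer has weight at least $d_{G_t}$, so the surviving contribution is dominated by a geometric tail $\sum_{w\ge d_{G_t}}\bigl((m-1)\nu\bigr)^{w}$, which vanishes as $d_{G_t}\to\infty$.

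The step I expect to be the main obstacle is making this last paragraph rigorous: one is averaging a \emph{ratio} of partition functions, so the cancellation between the numerator and denominator expansions must be carried out simultaneously with the disorder average, and the exponentially numerous trivial polymers --- innocuous in the ratio but not in the numerator by itself --- must be resummed against the denominator rather than bounded away. This is the type of estimate already carried out for Theorem~2 of Ref.~\onlinecite{Kovalev-Prabhakar-Dumer-Pryadko-2018}; the novelty here is that it must survive the weaker hypothesis in which only $d_{G_t}$, and not the degrees of $G_t$, is controlled, so one cannot replace the denominator by its leading (ground-state) term and must keep it throughout. A useful technical device is to use the gauge freedom (\ref{eq:gauge-transformation}) to replace each $\mathbf c(\mathbf a)$ by the minimum-weight representative of its class, which renders all exponents in the surviving expansion nonnegative and lets the geometric bound apply term by term.
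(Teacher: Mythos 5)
Your overall skeleton --- decompose $Z_{\mathbf{e}}(H_t^*;K)$ over the $2^k$ homology classes, use $\ln(1+x)\le x$, pass to the low-temperature representation $Z_{\mathbf{e}}(G_t;K)\propto\sum_{\mathbf{z}\in\mathcal{C}_{G_t}}e^{-2K\wgt(\mathbf{e}+\mathbf{z})}$, average the bond factors to $S=(1-p)e^{-2K}+pe^{2K}$, count connected objects with branching $m-1$, and kill the tail using $d_{G_t}\to\infty$ --- has the right shape: the paper's own proof consists of quoting Lemma~\ref{th:homological-bound} (established in the companion reference), whose bound $(m-1)^{d_G}S^{d_G+1}/[1-(m-1)S]$ is precisely such a geometric tail. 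The problem is that the step you yourself flag as ``the main obstacle'' --- bounding the disorder average of a ratio with a quenched, disorder-dependent denominator while resumming the homologically trivial configurations --- \emph{is} the entire quantitative content of the theorem, i.e.\ it is Eq.~(\ref{eq:homological-bound}). Deferring it to ``the type of estimate carried out in'' the companion paper, without either reproducing that argument or invoking a precisely stated result, leaves the proof incomplete exactly where it matters.

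Moreover, the specific devices you offer to close the gap do not work. (i) Gauge-fixing each class to a minimum-weight representative cannot ``render all exponents nonnegative'': $\nu_b=e^{2K}>1$ on every flipped bond no matter which representative is chosen; only the averaged activity $[\nu_b]_p=S$ is small. Consequently a term-by-term (quenched) polymer bound with activity $\nu$ conflates annealed and quenched expansions --- at large $K$ and small $p>0$ the expansion you describe does not converge for typical $\mathbf{e}$ even though condition (\ref{eq:low-T-low-p-region}) holds, so the cancellation must indeed be organized so that the disorder average is taken of a manifestly factorized, $\mathbf{e}$-independent support, which is what remains unproved. (ii) The ``trivial polymers'' in your numerator range over $\mathcal{C}_{H_t}^\perp$, while the denominator sums only over $\mathcal{C}_{G_t}$, so they are not literally common to both and do not cancel verbatim; their excess is again controlled by the very homological difference you are trying to bound. (iii) If instead one bounds the denominator below by its $\mathbf{z}=\mathbf{0}$ term after collapsing the numerator onto $\mathcal{C}_{H_t}^\perp\setminus\mathcal{C}_{G_t}$, the number of weight-$w$ configurations is of order $\binom{n}{w}$ (an arbitrary disconnected trivial part, e.g.\ far-away short cycles, can be attached freely), not $(m-1)^{w-1}$, and the resulting sum is exponentially large in $n$ rather than a vanishing tail $\sum_{w\ge d_{G_t}}[(m-1)S]^{w}$. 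A correct completion must keep enough of the denominator to absorb exactly this trivial entropy (as done in the cited lemma, or via a self-consistent inequality of the form $X\le\varepsilon(1+X)$ for $X=\sum_{\mathbf{c}\ne\mathbf{0}}Z_{\mathbf{e}+\mathbf{c}}/Z_{\mathbf{e}}$); as written, your proposal assumes the decisive estimate rather than proving it.
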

\noindent\textsc{Remarks:} \ref{th:lts-convergence}-1.\ The bound in Theorem
\ref{th:lts-convergence} guarantees the existence of a
\emph{homological} region where $\Delta f_t$ converges to zero.
Generally, such a region may be wider than what is granted by the
sufficient condition (\ref{eq:low-T-low-p-region}).  We will denote
$K_h(G,H;p)$ the smallest $K>0$ such that the series $\Delta f_t$
converges to zero at any $K'>K$.  The corresponding temperature,
$T_h(G,H;p)\equiv J/K_h(G,H;p)$, is the upper boundary for the
homological region at this $p$.  Eq.~(\ref{eq:low-T-low-p-region})
implies, in particular, that $K_h(G,P;0)\ge \ln(m-1)/2$.  

\smallskip\noindent\ref{th:lts-convergence}-2.\ In the homological
region, the sequence of the average free energy densities
$[f_\mathbf{e}(G_t,K)]_p$ converges iff the sequence
$[f_\mathbf{e}(H_t^*,K)]_p$ converges, and the corresponding limits
coincide.  

\smallskip\noindent\ref{th:lts-convergence}-3.\ In analogy with
Eq.~(\ref{eq:defect-tension}), we introduce the defect tension in the
presence of disorder,
\begin{equation}
\tau_{\mathbf{c},\mathbf{e}}\equiv
\tau_{\mathbf{c},\mathbf{e}}(G;K)\equiv 
 {1\over d_\mathbf{c}} \left\{F_{\mathbf{e}+\mathbf{c}}(G;K)
   -F_{\mathbf{e}}(G;K)\right\}, 
  \label{eq:disorder-averaged-tension} 
\end{equation}
where $d_\mathbf{c}\ge d_G$ is the minimum weight of the codeword
equivalent to
$\mathbf{c}\in\mathcal{C}_H^\perp\setminus \mathcal{C}_G$.  While the
tension (\ref{eq:disorder-averaged-tension}) is not necessarily
positive, it satisfies the inequalities
\begin{equation}
  \label{eq:tension-generic-bounds}
| \tau_\mathbf{c,e}|\le\tau_\mathbf{c,0}\le 2K.
\end{equation}
We also define the weighted average defect tension,
\begin{equation}
  \label{eq:weighted-average-defect-tension}
  \bar\tau_p\equiv  {\sum_{\mathbf{c}\not\simeq\mathbf{0}}
   d_\mathbf{c}[\tau_{\mathbf{c},\mathbf{e}}]_p\over
   \sum_{\mathbf{c}\not\simeq\mathbf{0}}d_\mathbf{c}}, 
\end{equation}
where the average is over disorder and the $2^k-1$ non-trivial defect
classes.  This quantity satisfies the following bound in terms of the
average homological difference,
\begin{equation}
  \label{eq:average-tension}
  \zeta\, \bar\tau_p\ge R\ln2-[\Delta f_{\bf e}]_p,
\end{equation}
where the dimensionless constant $\zeta\le1/2$, see
Eq.~(\ref{eq:zeta-constant}) in the Appendix.  In the homological
phase this gives $\bar\tau_p\ge 2R\ln2$.  (A related bound was
previously obtained for the boundary of \emph{decodable} phase in
Ref.~\onlinecite{Kovalev-Pryadko-SG-2015}.)

\smallskip

In the absence of disorder, $\mathbf{e}=\mathbf{0}$, the specific
homological difference is
self-dual\cite{Kovalev-Prabhakar-Dumer-Pryadko-2018}, up to an
exchange of the matrices $G$ and $H$, and an additive constant,
\begin{equation}
  \label{eq:homological-diff-duality}
\Delta f_\mathbf{0}
(G,H;K)=R\ln 2-\Delta f_\mathbf{0}
(H,G;K^*).  
\end{equation}
Comparing with the general inequalities~(\ref{eq:homo-generic-bounds}), one
sees that a point close to the lower bound is mapped to a point close
to the corresponding upper bound.  This implies a version of Theorem
\ref{th:lts-convergence} applicable for high temperatures:
\begin{restatable}{theorem}{htsconvergencehomo}
  \label{th:hts-convergence-homo}
  Consider a sequence of pairs of weakly dual Ising models defined by
  pairs of finite binary matrices with mutually orthogonal rows,
  $G_tH_t^T=0$, $t\in\mathbb{N}$, where row weights of each $G_t$ do
  not exceed a fixed $m$, CSS distances $d_{H_t}$ are increasing with
  $t$, and the sequence of CSS rates $R_t\equiv k_t/n_t$ converges,
  $\lim_t R_t=R$.  Then, for any $K\ge0$ such that $(m-1)\tanh K<1$,
  the sequence $\Delta f_t\equiv [\Delta f_\mathbf{e}(G_t,H_t;K)]_p$,
  $t\in\mathbb{N}$, converges to $R\ln 2$.
\end{restatable}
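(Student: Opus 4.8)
The strategy combines three ingredients. First, I would reduce the disorder-free statement to Theorem~\ref{th:lts-convergence} by duality: apply that theorem to the exchanged sequence of pairs $(H_t,G_t)$ at the Kramers--Wannier dual coupling $K^{*}$, $\tanh K^{*}=e^{-2K}$. For the pair $(H_t,G_t)$ the bounded-row-weight matrix is $G_t$ and the relevant CSS distance is $d_{H_t}$, so the hypotheses of Theorem~\ref{th:lts-convergence} are met; and at $p=0$ the region~(\ref{eq:low-T-low-p-region}) reads $(m-1)e^{-2K^{*}}<1$, which is exactly $(m-1)\tanh K<1$ because $e^{-2K^{*}}=\tanh K$. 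Hence $\Delta f_\mathbf{0}(H_t,G_t;K^{*})\to0$ there, and by the self-duality~(\ref{eq:homological-diff-duality}) together with $R_t\to R$,
\begin{equation}
  \Delta f_\mathbf{0}(G_t,H_t;K)=R_t\ln2-\Delta f_\mathbf{0}(H_t,G_t;K^{*})\longrightarrow R\ln2.\nonumber
\end{equation}
Second, the lower bound in~(\ref{eq:homo-generic-bounds}), $[\Delta f_\mathbf{e}(G_t,H_t;K)]_p\ge\Delta f_\mathbf{0}(G_t,H_t;K)$, then gives $\liminf_t[\Delta f_\mathbf{e}(G_t,H_t;K)]_p\ge R\ln2$.

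The third and main ingredient is the matching upper bound, which I would derive from the exact high-temperature (loop) expansion. Expanding $e^{K(-1)^{e_b}R_b}=\cosh K\,[1+(-1)^{e_b}R_b\tanh K]$ over the bonds and summing over spins gives, for any binary matrix $\Theta$ with $r$ rows,
\begin{equation}
  Z_\mathbf{e}(\Theta;K)=2^{r}(\cosh K)^{n}\sum_{\mathbf{t}\in\mathcal{C}_\Theta^{\perp}}(\tanh K)^{\wgt\mathbf{t}}(-1)^{\mathbf{t}\cdot\mathbf{e}}.\nonumber
\end{equation}
Since $H_t^{*}$ is obtained from $G_t$ by adjoining $k_t$ rows and $\mathcal{C}_{H_t^{*}}^{\perp}=\mathcal{C}_{H_t}\subseteq\mathcal{C}_{G_t}^{\perp}$, this gives
\begin{equation}
  \Delta f_\mathbf{e}(G_t,H_t;K)=R_t\ln2+\frac1{n_t}\ln\frac{\sum_{\mathbf{t}\in\mathcal{C}_{H_t}}(\tanh K)^{\wgt\mathbf{t}}(-1)^{\mathbf{t}\cdot\mathbf{e}}}{\sum_{\mathbf{t}\in\mathcal{C}_{G_t}^{\perp}}(\tanh K)^{\wgt\mathbf{t}}(-1)^{\mathbf{t}\cdot\mathbf{e}}},\nonumber
\end{equation}
both sums being positive as they are partition-function ratios. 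In the region $(m-1)\tanh K<1$, with row weights of $G_t$ bounded by $m$, each loop sum converges absolutely and its logarithm admits a convergent cluster expansion whose polymers are the connected cycles of $G_t$ (connected subsets of bonds meeting every vertex an even number of times), carrying $\mathbf{e}$-dependent signs but $\mathbf{e}$-independent magnitudes $(\tanh K)^{\wgt\gamma}$. By~(\ref{eq:CSS-dist}), $d_{H_t}$ is the smallest weight of a vector in $\mathcal{C}_{G_t}^{\perp}\setminus\mathcal{C}_{H_t}$, so every connected cycle of weight $<d_{H_t}$ lying in $\mathcal{C}_{G_t}^{\perp}$ in fact lies in $\mathcal{C}_{H_t}$; consequently the cluster expansions of the logarithms of numerator and denominator coincide term by term on all clusters of total size $<d_{H_t}$, and the remaining clusters contribute, by the geometric tail bound for a convergent cluster expansion, at most $n_t C\rho^{\,d_{H_t}}$ to each (with $\rho<1$ depending only on $m$ and $K$), uniformly in $\mathbf{e}$. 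Therefore $|\Delta f_\mathbf{e}(G_t,H_t;K)-R_t\ln2|\le 2C\rho^{\,d_{H_t}}\to0$ uniformly in $\mathbf{e}$, and averaging over disorder with $R_t\to R$ gives $[\Delta f_\mathbf{e}(G_t,H_t;K)]_p\to R\ln2$. (This last step in fact subsumes the first two, but I would keep the duality reduction since it carries the conceptual content.)

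I expect the last step to be the main obstacle: establishing the cluster expansion and, above all, its geometric tail bound in a form valid for the weight-limited (hyper)graphs considered here -- that is, bounding the number of connected cycles of a given weight through a fixed bond by a constant raised to that weight so that $(m-1)\tanh K<1$ gives absolute convergence, and then estimating the total weight of clusters that contain a polymer of weight $\ge d_{H_t}$. This is precisely the combinatorial input underlying the proof of Theorem~\ref{th:lts-convergence} in Appendix~\ref{app:ths-convergence-proof}, here applied on the $G_t$ side; I would reuse that estimate rather than reprove it.
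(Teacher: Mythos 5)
Your first ingredient is exactly the paper's proof: the authors prove Theorem \ref{th:hts-convergence-homo} by applying Theorem \ref{th:lts-convergence} at $p=0$ to the interchanged pair $(H_t,G_t)$ at the Kramers--Wannier dual coupling $K^*$, observing that condition (\ref{eq:low-T-low-p-region}) then becomes $(m-1)\tanh K<1$, and combining the duality relation (\ref{eq:homological-diff-duality}) with $R_t\to R$ --- precisely your chain $\Delta f_\mathbf{0}(G_t,H_t;K)=R_t\ln2-\Delta f_\mathbf{0}(H_t,G_t;K^*)\to R\ln2$. Where you go further is the disorder average: since Eq.~(\ref{eq:homological-diff-duality}) holds only at $\mathbf{e}=\mathbf{0}$, the printed proof literally establishes the $p=0$ statement and is silent on $[\Delta f_\mathbf{e}]_p$; your second ingredient, the lower bound from (\ref{eq:homo-generic-bounds}), is correct and already supplies half of the general-$p$ claim.

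Your third ingredient (the loop sums over $\mathcal{C}_{H_t}$ and $\mathcal{C}_{G_t}^\perp$, whose ratio you identify correctly, plus a cluster-expansion tail bound uniform in $\mathbf{e}$) is a reasonable route to the missing upper bound, but, as you anticipate, it is not yet a proof, and two points need repair. First, the numerator sum over $\mathcal{C}_{H_t}$ is not a polymer gas: the connected components of a homologically trivial codeword need not individually lie in $\mathcal{C}_{H_t}$ (two nontrivial cycles in the same class add up to a trivial one), so ``the cluster expansion of the numerator'' is not defined term by term as you state. The usable statement is weaker: every configuration all of whose components have weight $<d_{H_t}$ is automatically trivial and hence common to both sums, so the numerator--denominator difference is carried entirely by configurations containing a connected cycle of weight $\ge d_{H_t}$; this is exactly the structure exploited in Lemma \ref{th:homological-bound}, and controlling that remainder relative to a \emph{signed} denominator, uniformly in $\mathbf{e}$, is the real work. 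Second, generic cluster-expansion convergence criteria do not give the bare threshold $(m-1)\tanh K<1$ for free (they typically cost an extra multiplicative constant), whereas the counting behind Lemma \ref{th:homological-bound} is tailored to yield exactly $(m-1)S<1$; so ``reusing that estimate on the $G_t$ side'' amounts to redoing the companion paper's combinatorial bound in the dual (high-temperature) language rather than citing a standard cluster theorem.
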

\noindent Remarks: \ref{th:hts-convergence-homo}-1.\ Since duality is
used in the proof, we had to switch the conditions on the matrices
$G_t$ and $H_t$.  Similarly, the bound for $\tanh K$ is the
Kramers-Wannier dual of that in Eq.~(\ref{eq:low-T-low-p-region}) at
$p=0$.

\smallskip\noindent\ref{th:hts-convergence-homo}-2.\ We will call the
temperature region where the sequence $\Delta f_t$ in Theorem
\ref{th:hts-convergence-homo} converges to $R\ln2$  the \emph{dual
  homological} region.  Given that the homological region in the
absence of disorder extends throughout the interval $K\ge K_h(G,H)$,
the corresponding interval for the dual homological region is
$K\le K_h^*(H,G)$, where $K^*$ denotes the Kramers-Wannier dual, see
Eq.~(\ref{eq:duality}).  Respectively, $T_h^*(H,G)\equiv J/K_h^*(H,G)$
is the low temperature boundary of the dual homological region at
$p=0$.

\smallskip\noindent\ref{th:hts-convergence-homo}-3.\ In the dual
homological region, the sequence of the free energy densities
$f_\mathbf{0}(H_t^*,K)$ converges iff the sequence
$f_\mathbf{0}(G_t,K)$ converges, and the corresponding limits
$f_{H^*}(K)$ and $f_{G}(K)$ satisfy
  \begin{equation}
    f_{G}(K)=f_{H^*}(K)+R\ln2.\label{eq:hts-homo}
  \end{equation}
\smallskip

Notice that when both sets of matrices $H_t$ and $G_t$,
$t\in\mathbb{N}$, have bounded row weights, the same sequence
$\Delta f_\mathbf{0}(G_t,H_t;K)$ converges to zero in the homological
region, $K\ge K_h(G,H)$, and to $R\ln2$ in the dual homological
region, $K\le K_h^*(H,G)$.  Since the magnitude of the derivative of
the free energy density with respect to $K$ (proportional to the
energy per bond) is bounded, for any $R>0$ this implies the existence
of a minimum gap between the boundaries of the homological and the
dual homological regions.  We have the inequality
\begin{equation}
  \label{eq:ineq-three}
   K_h(G,H)-K_h^*(H,G)\ge R\ln2.
\end{equation}

\subsection{Free energy analyticity and convergence}
\label{sec:analyticity}

The end points, $T_h(G,H)$ and $T_h^*(H,G)$ of the two flat
regions in the temperature dependence of the homological difference
$\Delta f_\mathbf{0}$ are clearly the points of singularity.  What is
the relation between these points and the singular points of the
limiting free energy density in individual models, which are usually
associated with phase transitions? 

To establish such a relation, let us analyze the convergence of free
energy density and the analyticity of the corresponding limit as a
function of parameters.  To this end, consider the high-temperature
series (HTS) expansion of the free energy density (\ref{eq:Z}),
\begin{equation}
  \label{eq:hts-f-general}
  f_\mathbf{e}(\Theta;K,h)\equiv \sum_{s=1}^\infty
  \kappa_\mathbf{e}^{(s)}(\Theta;J,h')\,{\beta^s\over s!},    
\end{equation}
where both parameters are scaled with the inverse temperature,
$K\equiv \beta J$ and $h\equiv \beta h'$.  The coefficient in front of
$\beta^s$ is proportional to an order-$s$ cumulant of energy; it is a
homogeneous polynomial of the variables $h'$ and $J$ of degree $s$.  A
general bound on high-order cumulants from
Ref.~\onlinecite{Feray-Meliot-Nikeghbali-2013} gives the following
\begin{restatable}{statement}{htsconvergence}
  \label{th:hts-convergence} Consider any model in the form
  (\ref{eq:prob-distribution}), with an $(\ell,m)$-sparse $r\times n$
  coupling matrix $\Theta$.  The coefficients of the HTS expansion of
  the free energy density satisfy
  \begin{equation}
    \label{eq:cumulant-bound-specific}
    |\kappa_\mathbf{e}^{(s)}(\Theta;J,h')|\le 2^{s-1}s^{s-2} \,C\,
    (\Delta+1)^{s-1}A^s, 
  \end{equation}
  where $A\equiv \max(|J|,|h'|)$ and (\textbf{a}) with $J$ and $h'$
  both non-zero, $\Delta=\ell m$ and $C=r/n+1$, while (\textbf{b}) with $h'=0$,
  $\Delta=(\ell-1)m$ and $C=1$. 
\end{restatable}
Such a bound implies the absolute convergence of the HTS in a finite
circle in the complex plane of $\beta$ and, thus, the 
analyticity of
$f_\mathbf{e}(\Theta;K,h)$ and all of its derivatives as a function of
both variables in a finite region with $|K|$ and $|h|$ small enough,
in any finite $(\ell,m)$-sparse Ising model, 
at any given configuration of flipped bonds $\mathbf{e}$.  The same is
true for the average free energy $[f_\mathbf{e}(\Theta;J,h')]_p$.

In this region, at $p=0$, convergence and analyticity of the limiting
free energy density for models defined by a sequence of binary
matrices $\Theta_t$, $t\in \mathbb{N}$, is equivalent to existence of
the (pointwise) limit $\lim_t\kappa_\mathbf{0}^{(s)}(G_t;J,h')$ for
the individual coefficients (remember, each of them is a homogeneous
two-variate polynomial of degree $s$).  With the help of the cluster
theorem for the HTS coefficients, the existence of the limit can be
guaranteed by the Benjamini-Schramm
convergence\cite{Benjamini-Schramm-2001} of the corresponding Tanner
graphs, see
Refs.~\onlinecite{Borgs-Chayes-Kahn-Lovasz-2013,Lovasz-2016} for the
corresponding discussion for general models with up to two-body
couplings.  For our present purposes, the following subsequence
construction at $h=0$ is sufficient:%
\begin{restatable}{corollary}{htsanalyticseq}
  \label{th:hts-analytic-seq} Any infinite sequence of
  ($\ell,m$)-sparse Ising models, specified in terms of the matrices
  ${\Theta}_j$, $j\in \mathbb{N}$, has an infinite subsequence
  ${\Theta}_{j(t)}$, $t\in \mathbb{N}$, where
  $j:\mathbb{N}\to \mathbb{N}$ is strictly increasing, such that
  (\textbf{a}) for each $s$, the sequence of the coefficients
  $\kappa_\mathbf{0}^{(s)}(\Theta_t;J,0)$ converges with $t$, and
  (\textbf{b}) the sequence of free energy densities
  $f(\Theta_{j(t)};K)$ has a limit, $\varphi_\Theta(K)$, which is an
  analytic function of $K$ in the interior of the circle
  $|K|\le \{2e\,[(\ell-1) m+1]\}^{-1}$. 
  Here $e$ is the base of natural logarithm.
\end{restatable}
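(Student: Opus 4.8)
The plan is to combine the uniform cumulant bound of Statement~\ref{th:hts-convergence}(\textbf{b}) with a diagonal extraction. Fix $h'=0$, so that $\kappa_\mathbf{0}^{(s)}(\Theta;J,0)$ is a homogeneous polynomial of degree $s$ in the single variable $J$; write $\kappa_\mathbf{0}^{(s)}(\Theta;J,0)=\gamma_s(\Theta)\,J^s$ with $\gamma_s(\Theta)\equiv\kappa_\mathbf{0}^{(s)}(\Theta;1,0)\in\mathbb{R}$, so the high-temperature series reads $f(\Theta;K)=\sum_{s\ge1}\gamma_s(\Theta)\,K^s/s!$. By Statement~\ref{th:hts-convergence}(\textbf{b}) one has $|\gamma_s(\Theta)|\le 2^{s-1}s^{s-2}D^{s-1}$ with $D\equiv(\ell-1)m+1$, and---this is the feature that makes everything work---the right-hand side depends only on $\ell$ and $m$, hence is a bound valid uniformly over all $(\ell,m)$-sparse matrices, whatever their size.

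For part~(\textbf{a}) I would run a Bolzano--Weierstrass diagonal argument. The scalar sequence $\bigl(\gamma_1(\Theta_j)\bigr)_{j}$ is bounded, so it has a convergent subsequence along some infinite $S_1\subset\mathbb{N}$; then $\bigl(\gamma_2(\Theta_j)\bigr)_{j\in S_1}$ is bounded, giving a convergent subsequence along some infinite $S_2\subset S_1$; continuing, and letting $j(t)$ be the $t$-th smallest element of $S_t$, one gets a strictly increasing $j:\mathbb{N}\to\mathbb{N}$ with $(j(t))_{t\ge s}$ a subsequence of $S_s$ for each $s$. Hence for every fixed $s$ the limit $\gamma_s^\infty\equiv\lim_t\gamma_s(\Theta_{j(t)})$ exists, which is exactly claim~(\textbf{a}), and passing to the limit preserves the estimate, $|\gamma_s^\infty|\le 2^{s-1}s^{s-2}D^{s-1}$.

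For part~(\textbf{b}), set $\varphi_\Theta(K)\equiv\sum_{s\ge1}\gamma_s^\infty K^s/s!$. Using the elementary bound $s!\ge(s/e)^s$ one gets $2^{s-1}s^{s-2}D^{s-1}/s!\le (e/s^2)\,(2eD)^{s-1}$, so $\bigl|\gamma_s^\infty K^s/s!\bigr|\le (e|K|/s^2)(2eD|K|)^{s-1}$; this is summable in $s$ whenever $2eD|K|\le1$, i.e.\ on the closed disk $|K|\le 1/(2eD)=\{2e[(\ell-1)m+1]\}^{-1}$, so $\varphi_\Theta$ is the sum of a power series convergent there and is analytic in its interior. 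The same estimate, with $\gamma_s^\infty$ replaced by $\gamma_s(\Theta_{j(t)})$, provides a single $t$-independent summable majorant for the terms of $f(\Theta_{j(t)};K)$ on each closed subdisk; splitting $\sum_{s\ge1}$ into a head $s\le N$ (each term converging by part~(\textbf{a})) and a tail $s>N$ (bounded uniformly in $t$ by a quantity that vanishes as $N\to\infty$) then yields $f(\Theta_{j(t)};K)\to\varphi_\Theta(K)$, uniformly on compact subsets of the open disk. Since each $f(\Theta_{j(t)};\cdot)$ is analytic there by Statement~\ref{th:hts-convergence} and the discussion following it, the limit is analytic as well---as its series representation already shows.

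The argument is short once Statement~\ref{th:hts-convergence} is granted, so there is no substantial obstacle; the only places needing a modicum of care are the Stirling step that turns the cumulant bound into exactly the stated radius $\{2e[(\ell-1)m+1]\}^{-1}$, and the interchange of $\lim_t$ with $\sum_s$, which is legitimate precisely because---in the $h'=0$ case---the cumulant bound has $C=1$ and involves neither $r$ nor $n$, hence is uniform in $t$ and furnishes the required majorant on the whole disk. No equicontinuity/Montel machinery is needed, although one could alternatively phrase the final step via Vitali's theorem.
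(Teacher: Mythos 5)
Your proposal is correct and follows essentially the same route as the paper's proof: the uniform ($t$-independent) cumulant bound of Statement~\ref{th:hts-convergence}(\textbf{b}) at $h'=0$, a diagonal subsequence extraction for part~(\textbf{a}), and Stirling's bound to get absolute convergence of the limiting series in the disk $|K|\le\{2e[(\ell-1)m+1]\}^{-1}$, hence analyticity in its interior. Your head/tail argument merely makes explicit the interchange of $\lim_t$ with $\sum_s$ that the paper treats implicitly via the uniform majorant.
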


\noindent Remarks: \ref{th:hts-analytic-seq}-1.\ Similar analyticity
bounds apply to a very general class of $(\ell,m)$-sparse models with
up to $\ell$-body interactions, where each variable is included in up
to $m$ interaction terms, and magnitudes of different interaction
terms are uniformly bounded: the dependency graph used in the proof
can be used in application to all such models.  Examples include a
variety of discrete models, e.g., Potts and clock models with few-body
couplings, as well as compact continuous models with various symmetry
groups, Abelian and non-Abelian, where interaction terms are
constructed as traces of products of unitary matrices.  This is a
generalization of the ``right'' convergence established for models
with two-body couplings ($\ell=2$) in
Refs.~\onlinecite{Borgs-Chayes-Kahn-Lovasz-2013,%
  Lovasz-2016}.

\smallskip\noindent\ref{th:hts-analytic-seq}-2.\ The 
subsequence construction is not necessary in the special case where
the Tanner graphs defined by the bipartite matrices $\Theta_t$ are
transitive, with weak infinite-graph limit $\Theta$
and a center $0\in\mathcal{V}(\Theta)$, such that a ball of radius
$\rho_t$ in $\Theta_t$ is isomorphic to the ball of the same radius
centered around $0$ in $\Theta$; here the sequence of the radii is
increasing, $\rho_{t+1}>\rho_t$, $t\in\mathbb{N}$.  In this case the cluster
theorem\cite{Domb-Green-book} guarantees that the coefficients
$\kappa_s(\Theta_t)$ do not depend on $t$ for $\rho_t>s$.
\smallskip

To make precise statements applicable outside of the convergence
radius of the high-temperature series, we need to ensure that a
sequence of free energy densities converges.  The question of
convergence for a general sequence of Ising models being far outside the
scope of this work, we will assume the use of yet another subsequence
construction to guarantee the existence of the thermodynamical limit
for the free energy density.  This is based on the following Lemma
proved in Appendix \ref{app:subsequencelemma}.
\begin{restatable}{lemma}{subsequencelemma}
  \label{th:subsequence-lemma}
  Consider a sequence of $r_t\times n_t$ binary matrices $\Theta_t$,
  where $0<r_t\le n_t$, and $t\in\mathbb{N}$.  For any $M>0$, define a
  closed interval $I_M\equiv [0,M]$.  (a) There exists a subsequence
  $ \Theta_{t(i)}$, $i\in\mathbb{N}$, where the
  function $t:\mathbb{N}\to\mathbb{N}$ is strictly increasing,
  $t(i+1)>t(i)$ for all $i\in\mathbb{N}$, such that the sequence of
  Ising free energy densities converges for any $K\in I_M$,
  ${f}_i(K)\equiv f_\mathbf{0}(\Theta_{t(i)};K)\to f(K)$.  (b)
  The limit $f(K)$ is a continuous non-increasing concave
  function with left and right derivatives uniformly bounded,
  \begin{equation}
    -1\le f_+'(K)\le f_-'(K)\le0,\label{eq:left-right-bnd}  
  \end{equation}
for all $K\in I_M$.
\end{restatable}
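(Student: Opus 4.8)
The plan is to obtain uniform analytic control of the finite-size free energy densities $f_t(K)\equiv f_\mathbf{0}(\Theta_t;K)$ as functions of $K$ on the compact interval $I_M$, extract a convergent subsequence by a compactness argument for part (a), and then deduce part (b) by transferring the relevant structural properties to the limit.

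First I would record, for each fixed $t$, the elementary properties of $f_t$ as a function of $K\ge0$. Since $Z_\mathbf{0}(\Theta_t;K)=\sum_{\{S\}}\prod_b e^{KR_b}=\sum_{\{S\}}\exp\bigl(K\sum_b R_b\bigr)$ is a finite sum of exponentials that are linear in $K$ with strictly positive coefficients, $\ln Z_\mathbf{0}(\Theta_t;K)$ is a real-analytic convex function of $K$; hence $F_t=-\ln Z_\mathbf{0}$ and $f_t=F_t/n_t$ are smooth and concave on $K\ge0$. Differentiating gives $n_t\,f_t'(K)=-\langle\sum_b R_b\rangle$, and since $0\le\langle R_b\rangle\le1$ for every bond — the lower bound being the first GKS inequality~(\ref{eq:gks-ineq-one}) in the ferromagnetic case $\mathbf{e}=\mathbf{0}$, the upper bound trivial — one obtains $-1\le f_t'(K)\le0$. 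Thus each $f_t$ is non-increasing and $1$-Lipschitz on $I_M$. Finally $f_t(0)=-(r_t/n_t)\ln2\in[-\ln2,0]$ because $0<r_t\le n_t$, so $f_t(K)\in[-\ln2-M,0]$ for all $K\in I_M$; the family $\{f_t\}_{t\in\mathbb{N}}$ is therefore uniformly bounded and uniformly equicontinuous on $I_M$.

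Part (a) then follows from the Arzel\`a--Ascoli theorem: a uniformly bounded, equicontinuous sequence of functions on a compact interval has a subsequence $f_{t(i)}$ converging uniformly on $I_M$, with $t:\mathbb{N}\to\mathbb{N}$ strictly increasing; equivalently, one passes to a diagonal subsequence making $f_{t(i)}(q)$ converge for every rational $q\in I_M$ and extends convergence to all of $I_M$ using the uniform $1$-Lipschitz bound. Write $f(K)\equiv\lim_i f_{t(i)}(K)$. For part (b), $f$ inherits each property by passing to the limit in the defining inequalities: it is continuous (uniform limit of continuous functions), non-increasing and $1$-Lipschitz, and concave (a pointwise limit of concave functions is concave). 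A concave function on an interval has finite left and right derivatives at every interior point, with $f_-'(K)\ge f_+'(K)$; combining this with the monotonicity and $1$-Lipschitz bounds yields $-1\le f_+'(K)\le f_-'(K)\le0$, i.e.\ Eq.~(\ref{eq:left-right-bnd}) (with the obvious one-sided reading at the endpoints $K=0$ and $K=M$).

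The argument is essentially routine calculus plus compactness; the only step needing a structural input rather than direct estimation is the sign $f_t'(K)\le0$, which rests on $\langle R_b\rangle\ge0$, precisely the first GKS inequality, available here because we are in the ferromagnetic, zero-field setting. The only mild care required is in making the diagonal/Arzel\`a--Ascoli extraction explicit and in handling the endpoints of $I_M$ through one-sided derivatives.
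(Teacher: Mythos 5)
Your proof is correct and follows essentially the same route as the paper's: uniform boundedness of $f_t$ on $I_M$, the uniform derivative bound $-1\le f_t'\le 0$ (the paper states it directly, you derive it via $\langle R_b\rangle\in[0,1]$ using the first GKS inequality), a diagonal extraction over a countable dense set combined with equicontinuity (Arzel\`a--Ascoli) for part (a), and inheritance of monotonicity, concavity, and the one-sided derivative bounds in the limit for part (b).
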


Let us now assume that we have a sequence of pairs of weakly-dual
weight-limited Ising models which (a) satisfy the conditions of
Theorems \ref{th:lts-convergence} and \ref{th:hts-convergence-homo}
with the asymptotic rate $R$, (b) such that the coefficients of the
corresponding HTSs converge, so that the sequences of free energy
densities $f(G_t;K)$ and $f(H_t;K)$ both converge to analytic
functions, $\varphi_G(K)$ and $\varphi_H(K)$ respectively, at $|K|$
sufficiently small (Corollary \ref{th:hts-analytic-seq}), and, in
addition, (c) the sequences of free energy densities both converge on
an interval of real axis $I_M$, with $M>\ln(m-1)/2$.

The interval in (c) is such that Theorems \ref{th:lts-convergence} and
\ref{th:hts-convergence-homo} can be used to extend the convergence to
the entire real axis; we denote the corresponding limits $f_G(K)$ and
$f_H(K)$.  The continuity of the functions $f_G(K)$ and $f_H(K)$ (and
the corresponding duals), along with the inequality
(\ref{eq:monotonicity}) which also survives the limit, guarantee that
in the range of temperatures between the homological and the dual
homological regions, $ T_h(G,H)<T<T_h^*(H,G)$, the specific
homological difference $\Delta f(K)\equiv f_G(K)-f_{H^*}(K)$ satisfies
the strict inequality
\begin{equation}
  \label{eq:homo-difference-intermediate}
  0<\Delta f(K)<R\ln2.
\end{equation}

Notice that the existence of the limit on the real axis does not
guarantee analyticity which is only guaranteed by condition (b) in a
finite vicinity of $K=0$.  Hereafter, we will assume that $f_G(K)$ is
analytic on the interval $0\le K<K_c(G)$.  That is, for any
$\epsilon>0$, there exists a simply-connected open complex region
$\Omega_\epsilon\in\mathbb{C}$ which includes the union of the circle
of convergence of HTS for $\varphi_G(K)$ from Corollary
\ref{th:hts-analytic-seq} and the interval $I_{M}$,
$M=K_c(G)-\epsilon$, such that the sum of HTS series $\varphi_G(K)$
can be analytically continued to $\Omega_\epsilon$, and the result
coincides with the limit $f_G(K)$ on the real axis, $K\in I_M$.
Further, we will assume that $K_c(G)$ is the largest value at which
this is possible.  Such a threshold may arise either (i) because
$K_c(G)$ is a singular point of $\varphi_G(K)$, e.g., the intersection
of the natural boundary of $\varphi_G(K)$ with the real axis, or (ii)
the limit on the real axis, $f_G(K)$, starts to deviate from the
result of the analytic continuation.  In either case, this guarantees
that the limit on the real axis, $f_G(K)$, has a singular point of
some sort at $K_c(G)$.

According to this definition, $T_c(G)=J/K_c(G)$ is the
highest-temperature point of non-analyticity of the limiting free
energy density $f_G(K)$; $f_G(J/T)$ is analytic for $T>T_c(G)$.  By
duality and Theorem \ref{th:lts-convergence}, $f_G(K)$ is also
analytic at low temperatures.  We denote $T'_c(G)\le T_c(G)$ the
lowest-temperature singular point of $f_G(J/T)$.  

We make similar assumptions about the properties of the limiting free
energy density $f_H(K)$, and use similar definitions of the critical
temperatures $T_c'(H)\le T_c(H)$ for $f_H(J/T)$.  We will also use the
dual functions, $f_{G^*}(K)$ and $f_{H^*}(K)$, which coincide with
$f_G(K^*)$ and $f_H(K^*)$ up to an addition of analytic functions of
$K$, see Eq.~(\ref{eq:duality}).  The corresponding lowest- and
highest-temperature singular points are exchanged by duality, e.g.,
$T_c'(H^*)=T_c^*(H)$, $T'_c(H)=T_c^*(H^*)$.  Convergence of
$\Delta f(G_t,H_t;K)$ to zero implies that $f_G(K)=f_{H^*}(K)$ for
$K>K_h(G,H)$, thus $f_G(K)$ is an analytic function in a complex
vicinity of any $K>\max\biglb(K_c^*(H),K_h(G,H)\bigrb)$.
Equivalently,
\begin{equation}
  T_c'(G)\ge \min\biglb(T_c^*(H)=T_c'(H^*),T_h(G,H)\bigrb).
  \label{eq:lower-tc-ineq}
\end{equation}

Once we are assured of convergence of the homological difference, the
first observation is that the limit, $\Delta f(K)$, is necessarily a
strictly convex function at $T_h(G,H)$, and a strictly concave
function at $T_h^*(H,G)$, the singular points which are also the
boundaries of the region separating the dual homological region at
small $K$ and the homological region at large $K$.  On the other hand,
both $f_{G}(K)$ and $f_{H^*}(K)$ are concave functions.  Therefore,
the convexity at $T_h(G,H)$ must originate from $f_{H^*}(G,H)$.

Unfortunately, this does not guarantee that $T_h(G,H)$ be a singular
point of $f_{H^*}(K)$.  A higher-order phase transition, with a
continuous specific heat but discontinuity or divergence in its first
or higher derivative, cannot be eliminated on the basis of the general
thermodynamical considerations alone.  Therefore, we formulate Theorem
\ref{th:temperature-inequalities} below (proved in Appendix
\ref{app:temperature-inequalities}) with a list of
independently-sufficient conditions.

\begin{restatable}{theorem}{temperatureinequalities}
  \label{th:temperature-inequalities}
  Let us assume that any one of the following Conditions is true:
  \begin{enumerate}
  \item The transition at $T_c'(G)$ is discontinuous or has a divergent
    specific heat;
  \item The derivative of $\Delta f(K)=f_G(K)-f_{H^*}(K)$ is
    discontinuous at $K_h\equiv K_h(G,H)$, or the derivative of
    $\Delta f(K)$ is continuous at $K_h$, but its second derivative
    diverges at $K_h$;
  \item Summation over  homological defects does not increase the
    critical temperature, $T_c(G^*)\le T_c(H)$.
  \end{enumerate}
  Then the Kramers-Wannier dual of the critical temperatures $T_c(H)$
  satisfies
  \begin{eqnarray}
    \label{eq:ineq-two}
    T_c^*(H)\le T_h(G,H).
  \end{eqnarray}
\end{restatable}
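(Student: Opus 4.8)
The plan is to argue by contradiction: assume $T_c^*(H)>T_h(G,H)$, and write $K_h\equiv K_h(G,H)$. Since $T_c^*(H)=T_c'(H^*)$ is by definition the lowest‑temperature singular point of $f_{H^*}$, this assumption says $f_{H^*}$ is analytic for every $T\le T_h(G,H)$, i.e.\ in a complex neighbourhood of the real half‑line $[K_h,\infty)$. Convergence of $\Delta f$ to zero in the homological region gives $f_G(K)=f_{H^*}(K)$ for $K>K_h$, so $f_G$ is likewise analytic on $(K_h,\infty)$, and any singularity of $f_G$ at or above $K_h$ must sit exactly at $K_h$. I would then show $f_G$ \emph{is} singular at $K_h$: if not, $\Delta f=f_G-f_{H^*}$ would be analytic in a connected complex neighbourhood of $K_h$ and vanish along the real interval $(K_h,K_h+\delta)$, hence vanish identically near $K_h$ by the identity theorem, contradicting Eq.~(\ref{eq:homo-difference-intermediate}), which forces $\Delta f>0$ on $(K_h^*(H,G),K_h)$ (a non‑empty interval by Eq.~(\ref{eq:ineq-three}), since $R>0$). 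Therefore $K_c'(G)=K_h$, i.e.\ $T_c'(G)=T_h(G,H)$.

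The second step is to determine the \emph{type} of this singularity of $f_G$ at $K_h$ from the facts already at hand: $f_G$ is concave with one‑sided derivatives everywhere (Lemma~\ref{th:subsequence-lemma}); $\Delta f$ is continuous, non‑negative, non‑increasing in $K$ (Eq.~(\ref{eq:monotonicity})) and $\equiv0$ for $K\ge K_h$; and $f_{H^*}$ is smooth at $K_h$. From these, $\Delta f'(K_h^+)=0$ while $\Delta f'(K_h^-)\le0$; writing $f_G=\Delta f+f_{H^*}$, concavity of $f_G$ (so $f_G'(K_h^-)\ge f_G'(K_h^+)$) together with smoothness of $f_{H^*}$ forces $\Delta f'(K_h^-)=0$, so the energy density $\propto f_G'$ is continuous at $K_h$. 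Likewise, were the specific heat $-f_G''$ to diverge as $K\to K_h^-$, then $\Delta f''$ (which differs from $f_G''$ by the bounded $f_{H^*}''$) would tend to $-\infty$ there; integrating $\Delta f''$ up to $K_h$ and using $\Delta f'(K_h)=0$ would give $\Delta f'>0$ on a left neighbourhood of $K_h$, contradicting that $\Delta f$ is non‑increasing. Hence near $K_h$ the energy density stays continuous and the specific heat bounded; equivalently, since $f_{H^*}$ is analytic at $K_h$, $\Delta f$ has the same (mild) singular structure there, with continuous first and bounded second derivative.

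The contradiction then follows from whichever Condition is assumed. Condition~1 (the transition at $T_c'(G)=T_h(G,H)$ is discontinuous or has divergent specific heat) and Condition~2 (the first derivative of $\Delta f$ is discontinuous at $K_h$, or continuous with divergent second derivative) are both directly excluded by the previous paragraph. For Condition~3 I would instead use duality: $f_{G^*}(K)$ coincides with $f_G(K^*)$ up to an analytic term, so the highest‑temperature singular point of $f_{G^*}$ is the Kramers-Wannier dual of the lowest‑temperature singular point of $f_G$, whence $T_c(G^*)=(T_c'(G))^*=(T_h(G,H))^*$; on the other hand $T_c^*(H)$ is the Kramers-Wannier dual of $T_c(H)$, and the assumption $T_c^*(H)>T_h(G,H)$, since the dual map reverses the temperature ordering, gives $T_c(H)<(T_h(G,H))^*=T_c(G^*)$, contradicting $T_c(G^*)\le T_c(H)$. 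In every case the assumption $T_c^*(H)>T_h(G,H)$ is untenable, establishing $T_c^*(H)\le T_h(G,H)$.

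I expect the main obstacle to be the regularity analysis of $f_G$ at $K_h$ in the second step: deriving ``continuous energy density and bounded specific heat'' purely from the concavity of $f_G$ and the monotonicity and boundary behaviour of $\Delta f$ requires care with one‑sided and distributional second derivatives of a concave limit, and one must check that the informal notions ``discontinuous transition'' and ``divergent specific heat'' in Condition~1 (and their counterparts in Condition~2) match exactly the regularity that this argument rules out. By comparison, the duality bookkeeping needed for Condition~3 is routine, once the order‑reversing Kramers-Wannier map on temperatures is tracked correctly.
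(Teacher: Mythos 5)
Your proof is correct and, modulo its contrapositive packaging, it is essentially the paper's own argument: both attribute the non-analyticity of $\Delta f$ at $K_h$ to either $f_G$ or $f_{H^*}$ using $f_G=f_{H^*}$ for $K>K_h$, both use concavity of the free energies together with the monotonicity and sign of $\Delta f$ to show that a discontinuous or divergent-specific-heat singularity cannot reside in $f_G$ alone at $K_h$ (Conditions 1 and 2), and both dispose of Condition 3 by the same Kramers--Wannier bookkeeping $T_c(G^*)=\bigl(T_c'(G)\bigr)^*$. Your explicit one-sided-derivative and second-derivative integration steps merely spell out what the paper phrases as the strict convexity of $\Delta f$ at $K_h$ having to be ``compensated'' by a divergent curvature of $f_{H^*}$.
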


\noindent\textsc{Remarks:} \ref{th:temperature-inequalities}-1.\
We are making the same assumptions about the properties of $f_H(K)$,
which gives $T_c^*(G)\le T_h(H,G)$.  Combining with
Eq.~(\ref{eq:ineq-three}), we have 
\begin{equation}
  \label{eq:multiplicity-ineq}
  K_c(H^*)-K_c(G)\ge R\ln2.
\end{equation}
This implies a strict inequality, $T_c(G)>T_c(H^*)$, when the
homological rank scales extensively, $R>0$, which is superficially
similar to the multiplicity of critical points on nonamenable infinite
graphs\cite{Wu-hyperb-2000,Schonmann-2001,Haggstrom-Jonasson-Lyons-2002},
see Sec.~\ref{sec:background}.  The difference is that our critical
temperatures correspond to points of non-analyticity of the limiting
free energy density in zero magnetic field; we do not have a direct
connection to magnetic transitions.%

\smallskip\noindent\ref{th:temperature-inequalities}-2.\ It is known
that stabilizer codes with generators local in $\mathbb{Z}^D$ and
divergent distances have asymptotically zero
rates\cite{Bravyi-Terhal-2009,Bravyi-Poulin-Terhal-2010}.  This is
perfectly consistent with the known fact that weight-limited models
local in $\mathbb{Z}^D$ have well-defined thermodynamical limits,
independent of the boundary conditions\cite{Griffiths-results-1972}.
For example, inequality (\ref{eq:multiplicity-ineq}) with $R=0$ is
saturated in the case of planar self-dual Ising models, where the
transition is in the self-duality point, which is the only
non-analyticity point of the free energy density. 

\smallskip\noindent\ref{th:temperature-inequalities}-3.\ Most
important application of Theorem \ref{th:temperature-inequalities} and
Eq.~(\ref{eq:multiplicity-ineq}) are few-body Ising models
that correspond to finite-rate quantum LDPC codes with distances
scaling as a power of the code length $n$, $d\ge An^\alpha$ with
$A,\alpha>0$.  Examples are quantum hypergraph-product (QHP) and related
codes\cite{Tillich-Zemor-2009,Kovalev-Pryadko-Hyperbicycle-2013}, and
higher-dimensional hyperbolic codes\cite{Guth-Lubotzky-2014}.  Because
of higher-order couplings, generic mean-field theory gives a
discontinuous transition, which is the case of Condition 1 in Theorem
\ref{th:temperature-inequalities}.  The discontinuous nature of the
transition has been verified numerically for one class of QHP
codes\cite{Kovalev-Prabhakar-Dumer-Pryadko-2018}.

\smallskip\noindent\ref{th:temperature-inequalities}-4.\ Ising models
on expander graphs are known to have mean-field
criticality\cite{Montanari-Mossel-Sly-2012,Schonmann-2001}.  A
combination of an analytic $f_{H^*}(K)$ and a finite specific heat
jump in $f_{G}(K)$ at $K_h(G,H)$ is not eliminated by the Conditions 1
or 2.  We discuss the important case of Ising models on hyperbolic
graphs in the next Section.

\smallskip\noindent\ref{th:temperature-inequalities}-5.\ GKS
inequalities imply that any spin average satisfies
$\langle S_{\cal A}\rangle_{G;K}\ge \langle S_{\cal
  A}\rangle_{H^*;K}$.  Physically, this ought to be sufficient to
guarantee Condition 3, but we are not aware of a general proof. 


\subsection{Application to models on hyperbolic graphs}
\label{sec:hyperbolic}
\subsubsection{Bounds for infinite-graph transition temperatures}

While the inequalities (\ref{eq:ineq-three}) are
(\ref{eq:multiplicity-ineq}) are certainly important results, they
address unconventionally defined critical points.  Both the
homological critical point, $T_h(G,H)$, and the end points of the
interval of possible non-analyticity, $T_c'(G)\le T_c(G)$, are defined
for sequences of Ising models without boundaries.  They are not
immediately related to the critical temperatures
$T_c^\mathrm{f}\le T_c^\mathrm{w}$ defined on related infinite systems
in terms of extremal Gibbs states with free/wired boundary conditions.

To bound these critical temperatures, consider a sequence of pairs of
weakly dual Ising models which satisfy the conditions of Theorems
\ref{th:lts-convergence} and \ref{th:hts-convergence-homo} with the
asymptotic rate $R>0$, with an additional assumption that matrices
$G_t$ and $H_t$ are incidence matrices of graphs, that is, they have
uniform column weights $\ell=\ell^*=2$.  In addition we assume that
the graph sequences converge weakly to a pair of infinite transitive
graphs, which we denote $\mathcal{G}=(\mathcal{V},\mathcal{E})$ and
$\mathcal{H}=(\mathcal{F},\mathcal{E})$, where $\mathcal{F}$ is the
set of faces in $\mathcal{G}$.  Weak convergence is defined as
follows: for some chosen vertex $0\in\mathcal{V}$, there is an
increasing sequence $\rho_t\in\mathbb{N}$ such that a ball
$\mathcal{B}(0,\rho_t)\subset\mathcal{G}$ of radius $\rho_t$ centered
at $0$, is isomorphic to a ball in $\mathcal{G}_t$.

These conditions necessarily imply that matrices $G_t$ and $H_t$
describe mutually-dual locally-planar graphs, and also that the graphs
$\mathcal{G}$ and $\mathcal{H}$ are mutually dual.

 Examples of such a sequence are given by sequences of finite
hyperbolic graphs constructed\cite{Siran-2001,Sausset-Tarjus-2007} as
finite quotients of the regular $\{f,d\}$ tilings of the infinite
hyperbolic plane, ${\bf H}(f,d)$, with $df/(d+f)>2$.  A graph in
such a sequence gives a tiling of certain surface, with $d$ regular
$f$-gons meeting in each vertex.  Hyperbolic graphs have been
extensively discussed in relation to quantum error correcting
codes\cite{Delfosse-Zemor-2010,%
  Delfosse-Zemor-2012,%
  Delfosse-2013,Delfosse-Zemor-2014,Breuckmann-Terhal-2015,%
  Breuckmann-Vuillot-Campbell-Krishna-Terhal-2017,Breuckmann-thesis-2017}.
Given such a finite locally-planar transitive graph with $n$ edges,
the quantum CSS code is a surface
code\cite{kitaev-anyons,Dennis-Kitaev-Landahl-Preskill-2002}; it is
constructed from the vertex-edge and plaquette-edge incidence
matrices, $G$ and $H$ respectively.  Here $H$ is also a vertex-edge
incidence matrix of a dual graph, which corresponds to the dual tiling
$\{d,f\}$ of the same surface.  Such a code has the minimal distance
scaling logarithmically with $n$, and it encodes $k=2g=2+nR$ qubits
into $n$, where $g$ is the genus of the surface and $R=1-2/d-2/f$ is
the asymptotic rate.

An extremal Gibbs ensemble on any infinite locally planar transitive
graph can be characterized by the average magnetization $m$, the
asymptotic correlation decay exponent $\alpha$
[Eq.~(\ref{eq:asymptotic-alpha})], and a similarly defined  asymptotic
domain wall tension 
\begin{equation}
  \tau\equiv\tau(\mathcal{G};K)=
  \inf_{\{i,j\}\subset\mathcal{F}} 
  \tau_{\mathbf{e}(i,j)},\label{eq:asymptotic-tau} 
\end{equation}
where $\mathbf{e}(i,j)$ is a defect that connects a pair of frustrated
plaquettes $i$ and $j$.  Generally, $\alpha=0$ whenever spontaneous
magnetization $m$ is non-zero.  A non-zero magnetization on a locally
planar transitive graph also implies $\tau>0$.  [This is a
generalization of the result from
Ref.~\onlinecite{Lebowitz-Pfister-1981}, see the proof in
Appendix~\ref{app:tension-lower-bound}.]  Respectively,
electro-magnetic duality (\ref{eq:em-duality-tension}) implies
\begin{statement}
 \label{th:planar-duality}
 Let $\mathcal{G}$ and $\mathcal{H}$ be a pair of infinite mutually
 dual locally-planar transitive graphs.  Denote
 $T_c^\mathrm{f}(\mathcal{G})$ and $T_c^\mathrm{w}(\mathcal{H})$ the
 critical temperatures of the extremal Gibbs ensembles for Ising
 models on $\mathcal{G}$ and $\mathcal{H}$ with free and wired
 boundary conditions, respectively.  Then these temperature are
 Kramers-Wannier duals of each other,
 \begin{equation}
   \label{eq:infinite-graph-duality}
   T_c^\mathrm{f}(\mathcal{G})=[T_c^\mathrm{w}(\mathcal{H})]^*. 
 \end{equation}
 For each model, in the ordered phase, $T<T_c$, $\alpha=0$ and
 $\tau>0$, while in the disordered phase, $T>T_c$, $\alpha>0$ and
 $\tau=0$.
\end{statement}
We can now  prove the following:
\begin{theorem}
  \label{th:wired-homo-inequality}
  For any regular $\{f,d\}$ tiling of an infinite hyperbolic plane,
  $fd/(f+d)>2$, the critical temperatures of the Ising model with free
  and wired boundary conditions,
  $T_c^\mathrm{f}=1/K_c^\mathrm{f}$ and
  $T_c^\mathrm{w}=1/K_c^\mathrm{w}$, satisfy 
  \begin{equation}
    \label{eq:Kc}
    K_c^\mathrm{f}-K_c^\mathrm{w}\ge R\ln2,\quad R=1-2/f -2/d.    
  \end{equation}
\end{theorem}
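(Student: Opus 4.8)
The plan is to realise the infinite tiling $\mathcal G=\mathbf H(f,d)$ and its dual $\mathcal H=\mathbf H(d,f)$ as weak limits of a sequence of weakly-dual Ising model pairs of exactly the type analysed in Secs.~\ref{sec:general}--\ref{sec:analyticity}, and then to feed the homological inequality (\ref{eq:ineq-three}) into the duality relation of Statement~\ref{th:planar-duality}. First I would fix the sequence: the standard hyperbolic surface-code family of regular $\{f,d\}$ tilings of closed orientable surfaces of genus $g_t\to\infty$ obtained as finite quotients of $\mathcal G$, with $G_t$ the vertex--edge and $H_t$ the face--edge incidence matrix, so $G_tH_t^T=0$, the column weights are $\ell=\ell^*=2$, and the row weights are $d$ and $f$. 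The CSS distances grow logarithmically with $n_t$, so after passing to a subsequence they are strictly increasing, and Theorems~\ref{th:lts-convergence} and \ref{th:hts-convergence-homo} apply both to $(G_t,H_t)$ and, with the roles of $G$ and $H$ exchanged, to $(H_t,G_t)$; Euler's relation together with $dV_t=2E_t=fF_t$ gives $k_t=2g_t$ and $R_t=k_t/n_t\to R=1-2/f-2/d$, which is positive precisely because $fd/(f+d)>2$. Transitivity of the infinite tilings and the cluster theorem (Remark~\ref{th:hts-analytic-seq}-2) make the HTS coefficients $t$-independent for $\rho_t>s$, so $f(G_t;K)$ and $f(H_t;K)$ converge near $K=0$ to analytic functions; a further passage to a subsequence via Lemma~\ref{th:subsequence-lemma} secures their convergence on an interval $I_M$ with $M>\ln(m-1)/2$. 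Hence the standing hypotheses (a)--(c) preceding Theorem~\ref{th:temperature-inequalities} hold for both the pair and the swapped pair, so (\ref{eq:ineq-three}), $K_h(G,H)-K_h^*(H,G)\ge R\ln2$, and its dual are available (note that this inequality needs no Condition of Theorem~\ref{th:temperature-inequalities}, only the uniform bound on the energy per bond).

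The substantive step is to pin the two homological thresholds to the magnetisation-based critical points. I claim $K_h(G,H)\le K_c^{\mathrm f}(\mathcal G)$, i.e. the homological region contains the whole free-ordered interval $K>K_c^{\mathrm f}(\mathcal G)$. For $T<T_c^{\mathrm f}(\mathcal G)$ the extremal Gibbs ensemble with free boundary conditions has nonzero magnetisation, hence a strictly positive asymptotic domain-wall tension $\tau(\mathcal G;K)$ (the Lebowitz--Pfister-type bound quoted before Statement~\ref{th:planar-duality}), equivalently, by (\ref{eq:em-duality-tension}), genuine area-law clustering of the dual model on $\mathcal H$ at $K^*$. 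I would then run a Peierls-type estimate for the closed-surface ratio $Z(H_t^*)/Z(G_t)$: every homologically non-trivial contour on the genus-$g_t$ surface carries weight at most $e^{-\tau(\mathcal G;K)\,d_{G_t}}$, and, using the diverging CSS distance $d_{G_t}\to\infty$ together with the structure already exploited in the proof of Theorem~\ref{th:lts-convergence} and its refinement in Ref.~\onlinecite{Kovalev-Prabhakar-Dumer-Pryadko-2018} to organise the sum over all $2^{k_t}$ homology classes, one obtains $\Delta f_{\mathbf 0}(G_t,H_t;K)\to0$ throughout this regime, i.e. $T_h(G,H)\ge T_c^{\mathrm f}(\mathcal G)$. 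Applying the same argument to the dual pair gives $K_h(H,G)\le K_c^{\mathrm f}(\mathcal H)$; since $K_h^*(H,G)=[K_h(H,G)]^*$ and, by Statement~\ref{th:planar-duality} with $\mathcal G\leftrightarrow\mathcal H$, $K_c^{\mathrm w}(\mathcal G)=[K_c^{\mathrm f}(\mathcal H)]^*$, dualising and reversing the inequality yields $K_c^{\mathrm w}(\mathcal G)\le K_h^*(H,G)$.

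Combining the two bounds of the previous paragraph with (\ref{eq:ineq-three}) gives
\[
K_c^{\mathrm f}(\mathcal G)-K_c^{\mathrm w}(\mathcal G)\ \ge\ K_h(G,H)-K_h^*(H,G)\ \ge\ R\ln2,
\]
which is (\ref{eq:Kc}). I expect the real work to be the claim $K_h(G,H)\le K_c^{\mathrm f}(\mathcal G)$: upgrading ``free-ordered phase'' (a statement about magnetisation, or about $\tau>0$) to ``$\Delta f_{\mathbf 0}\to0$'' uniformly down to $T_c^{\mathrm f}(\mathcal G)$, rather than only on the explicit window $K>\ln(f-1)/2$ of Theorem~\ref{th:lts-convergence}, since a crude union bound over the $2^{k_t}$ homology classes is too lossy once $k_t$ is extensive. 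A convenient way around it is to invoke the duality $\Delta f_{\mathbf 0}(G,H;K)=R\ln2-\Delta f_{\mathbf 0}(H,G;K^*)$ of (\ref{eq:homological-diff-duality}) together with (\ref{eq:em-duality-tension}), which recasts the vanishing of $\Delta f_{\mathbf 0}$ on the closed-surface sequence as exponential clustering of the dual model in its disordered phase, valid for all $T^*>T_c^{\mathrm w}(\mathcal H)$, i.e. exactly for $T<T_c^{\mathrm f}(\mathcal G)$.
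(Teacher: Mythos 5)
Your overall scaffolding (finite quotients, $G_t$ and $H_t$ as vertex--edge and face--edge incidence matrices, convergence via Remark \ref{th:hts-analytic-seq}-2 and Lemma \ref{th:subsequence-lemma}, then the gap inequality (\ref{eq:ineq-three}) plus the infinite-graph duality of Statement \ref{th:planar-duality}) matches the paper, but the load-bearing step does not go through. Your claim $K_h(G,H)\le K_c^{\mathrm f}(\mathcal G)$ asserts that the homological region ($\Delta f_\mathbf{0}\to0$) extends through the \emph{entire} free-boundary ordered phase, which is far stronger than what Theorem \ref{th:lts-convergence} provides (the explicit window (\ref{eq:low-T-low-p-region})), and neither of your two routes establishes it. The Peierls-type estimate fails quantitatively, as you yourself note: hyperbolic surface codes have $k_t\propto n_t$ but $d_{G_t}$ growing only logarithmically in $n_t$, so the union bound $2^{k_t}e^{-\tau d_{G_t}}$ diverges. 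The proposed rescue via (\ref{eq:homological-diff-duality}) is circular: it merely restates the needed claim as $\Delta f_\mathbf{0}(H,G;K^*)\to R\ln2$ for all $T^*>T_c^{\mathrm w}(\mathcal H)$, and ``exponential clustering in the dual disordered phase'' does not imply this---one must show that summing over exponentially many homology classes of typically extensive weight contributes exactly $k\ln2$ per volume, which is precisely the statement being sought; Theorem \ref{th:hts-convergence-homo} gives it only for $(m-1)\tanh K<1$. There is also a secondary unproved comparison hidden in your Peierls step: bounding the cost of a homologically nontrivial (hence unboundedly long) contour on the closed genus-$g_t$ surface below by the infinite-graph \emph{free}-boundary tension $\tau(\mathcal G;K)$ does not follow from GKS monotonicity, since the closed surface is not a subgraph relation in the required direction for defects that wrap around.

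The paper avoids your step (i) entirely by arguing in the opposite logical direction, on the wired side: outside the dual homological region one only knows the strict inequality $0\le\Delta f<R\ln2$ in the limit, but that already suffices, because the average-tension bound (\ref{eq:average-tension}) then gives $\bar\tau_0(G_t)\ge\epsilon>0$; subadditivity (\ref{eq:delta-subadittivity}) lets one cut an extensive-weight homological defect down to a two-plaquette defect keeping tension $\ge\epsilon$, GKS monotonicity transfers this to wired boundary conditions and survives the infinite-graph limit, and transitivity upgrades it to $\tau\ge\epsilon$ for the wired Gibbs state, giving $K_c^{\mathrm w}(\mathcal G)\le K_h^*(H,G)$ directly; the free-boundary bound then comes for free from Statement \ref{th:planar-duality} applied to the dual tiling, and (\ref{eq:ineq-three}) closes the argument. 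In short, the paper needs only ``$\Delta f$ has not yet saturated $R\ln2$ $\Rightarrow$ wired model ordered,'' whereas you need ``free model ordered $\Rightarrow$ $\Delta f=0$,'' and the latter implication is exactly what is missing from your proposal.
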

\begin{proof}
  For any regular $\{f,d\}$ tiling ${\cal G}\equiv {\bf H}(f,d)$ of the hyperbolic
  plane, consider a sequence of finite mutually dual locally planar
  transitive graphs $\mathcal{G}_t$ and ${\cal H}_t$, where the
  sequence ${\cal G}_t$ weakly converges to ${\cal G}$.  The
  corresponding sequence of incidence matrices satisfies the
  conditions of Theorems \ref{th:lts-convergence} and
  \ref{th:hts-convergence-homo} with the asymptotic rate $R>0$.
  Transitivity implies that the free energy density converges in a
  finite circle around $K=0$, see Remark \ref{th:hts-analytic-seq}-2.
  While we are not sure of convergence for larger $K$, Lemma
  \ref{th:subsequence-lemma} guarantees the existence of a subsequence
  of graphs, and corresponding pairs of incidence matrices $G_t$,
  $H_t$, $t\in\mathbb{N}$, such that the sequences of free energy
  densities $f(G_t;K)$ and $f(H_t;K)$ converge.  For such a sequence,
  the specific homological difference $\Delta f(G_t,H_t;K)$ also
  converges, which guarantees $\Delta f<R\ln2$ outside of the dual
  homological phase, $K>K_h^*(H,G)$.  Such an inequality implies the
  existence of an $\epsilon>0$ such that
  $\Delta f(G_t,H_t;K)<R\ln2-\epsilon/2$ at all sufficiently large
  $t$.  In turn, Eq.~(\ref{eq:average-tension}) implies that the
  average defect tension is bounded away from zero,
  $\bar\tau_0(G_t)\ge \epsilon$.

  While defects that contribute to the average $\bar\tau_0(G_t)$ have
  large weight, we notice that the free energy increment
  (\ref{eq:defect-delta}) associated with an arbitrary defect is
  subadditive, see Eq.~(\ref{eq:delta-subadittivity}).  Thus, a
  large-weight defect can be separated into smaller pieces;
  subadditivity (\ref{eq:delta-subadittivity}) ensures that
  $\max(\tau_{\mathbf{e}_1},\tau_{\mathbf{e}_2})\ge
  \tau_{\mathbf{e}_1+\mathbf{e}_2}$ as long as
  $d_{\mathbf{e}_1+\mathbf{e}_2}=d_{\mathbf{e}_1}+d_{\mathbf{e}_2}$.
  Thus, if we start with a homological defect with the tension
  $\tau_\mathbf{c}\ge\epsilon >0$, at each division we can select a
  piece with the tension not smaller than $\epsilon$.  Moreover, since
  homological defects are cycles on the dual graph, we can first
  separate $\mathbf{c}$ into simple cycles of weight not smaller than
  the corresponding CSS distance which increases with $t$, and then
  cut such a cycle in half to obtain a defect compatible with the
  definition (\ref{eq:asymptotic-alpha}).

  Further, GKS inequalities imply that the tension is monotonously
  non-decreasing when individual bonds' coupling is increased.  Thus,
  for the same defect $\mathbf{e}$ on $\mathcal{G}_t$ and on the
  corresponding subgraph with wired boundary conditions,
  $\tau_\mathbf{e}(G_t^{\rm w};K)\ge \tau_\mathbf{e}(G_t;K)\ge \epsilon$;
  this inequality survives the infinite graph limit.  Transitivity of
  $\mathcal{G}$ ensures that for the defect
  $\mathbf{e}=\mathbf{e}(i,j)$ connecting frustrated plaquettes $i$
  and $j$, the defect tension $\tau_\mathbf{e}\ge\epsilon$ depends
  only on the distance $d_\mathbf{e}$, which is the distance between
  vertices $i$ and $j$ on the dual graph $\mathcal{H}$.  This proves
  $\tau\ge\epsilon>0$ for the Ising model with wired boundary
  conditions on graph $\mathcal{G}$, at temperatures below the dual
  homological phase, $K>K_h^*(H,G)$.  Thus,
  $K_c^{\rm w}(\mathcal{G})\le K_h^*(H,G)$.

  Duality (\ref{eq:infinite-graph-duality}) also ensures that
  $K_c^\mathrm{f}(\mathcal{H})\ge K_h(G,H)$; inequality
  (\ref{eq:ineq-three}) gives Eq.~(\ref{eq:Kc}).
\end{proof}

\textsc{Remarks:} \ref{th:wired-homo-inequality}-1.  An interesting
fact about systems with finite rates $R>0$ is that electro-magnetic
duality (\ref{eq:em-duality-tension}) does not guarantee that area-law
exponent $\alpha_\mathbf{m}(G;K)$ be zero at low temperatures.  While
``area'' is the defect distance $d_\mathbf{m}$, the smallest number of
bonds in an equivalent defect, the ``perimeter'' is the
number of spins involved in the product, the syndrome weight
$\wgt \mathbf{s}$, where $\mathbf{s}=\mathbf{m}G^T$.  Standard
area/perimeter law argument assumes that perimeter can be
parametrically smaller than the area; this is not necessarily true for
systems with non-amenable Tanner graphs.

\noindent\ref{th:wired-homo-inequality}-2.  Even in the case of a
pair of locally planar graphs, a linear domain wall $\mathbf{e}$
connecting a pair of frustrated plaquettes may have a large perimeter
in the dual model, because of the additional spins corresponding to
the homological defects.  Any such defect that crosses the domain wall
(changes the sign of the corresponding spin average) increases the
perimeter in the dual model.  Such additional defects are absent with
free boundary conditions as considered in Theorem
\ref{th:wired-homo-inequality}. 

\subsubsection{Numerical results}

In addition to analytical bounds presented above, we also analyzed
numerically Ising models on several finite transitive hyperbolic
graphs constructed\cite{Siran-2001,Sausset-Tarjus-2007} as finite
quotients of the regular $\{5,5\}$ tilings of the infinite hyperbolic
plane.  We used canonical ensemble simulations with both local
Metropolis updates\cite{Metropolis-etal-1953} and Wolff cluster
algorithm\cite{Wolff-1989}, to compute the average magnetization
$m=\langle M\rangle/N$, susceptibility
$\chi=(\langle M^2\rangle-\langle M\rangle^2)/NT^2$, average energy
per bond $\varepsilon\equiv\langle E\rangle/n$, specific heat
$C=(\langle E^2\rangle-\langle E\rangle^2)/nT^2$, and the fourth
Binder cumulant\cite{Binder-1981}
$U_4=1-\langle S^4\rangle/(3\langle S^2\rangle^2)$. Here
$ M=|\sum_i S_i|$ is the (magnitude of the) total magnetization,
$E=- \sum_{\langle ij\rangle} S_i S_j$ is the total energy, $N$ and
$n$ respectively denote the number of spins and bonds, and
$\langle \cdot\rangle$ denotes the ensemble average.  For Metropolis
simulations, each run consisted of 128 cooling-heating cycles, with
1024 full graph sweeps at each temperature, with additional averaging
over 64 independent runs of the program.  The number of sweeps at each
temperature was sufficient to make any hysteresis unnoticeable.  For
Wolff algorithm simulations, each run consisted of 16 cooling-heating
cycles, with 4096 cluster updates at each temperature, and additional
averaging over 64 independent runs of the program.  The resulting
averages are shown in Figs.~\ref{fig:M} to \ref{fig:X}, where lines
(dots) show the data obtained with cluster (local Metropolis) updates,
respectively.  The results obtained using the two methods are very
close.

\begin{figure}[htbp]
  \centering
  \includegraphics[width=1.\columnwidth]{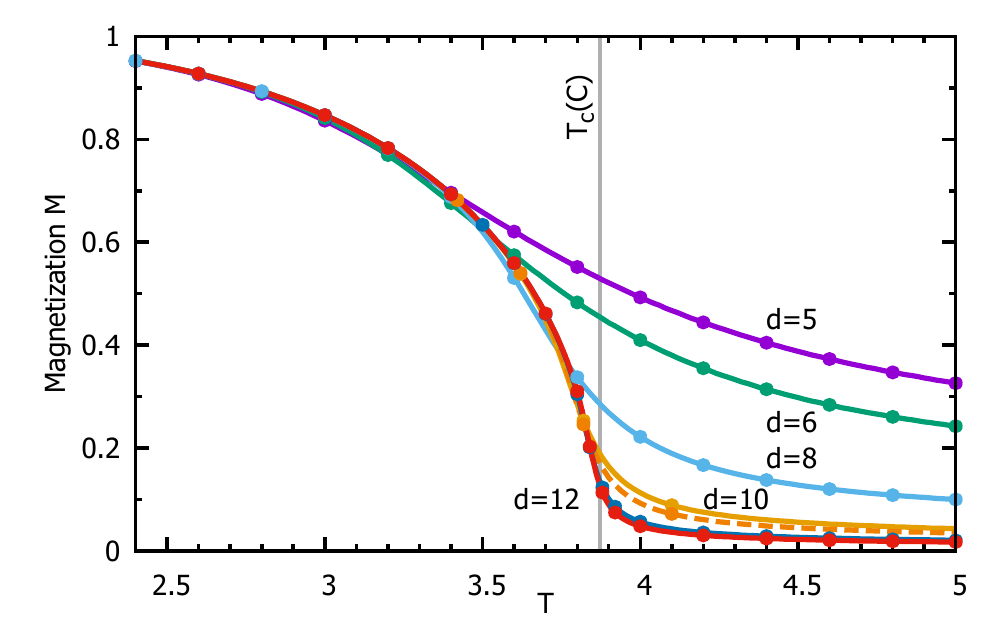}\\
  \includegraphics[width=\columnwidth]{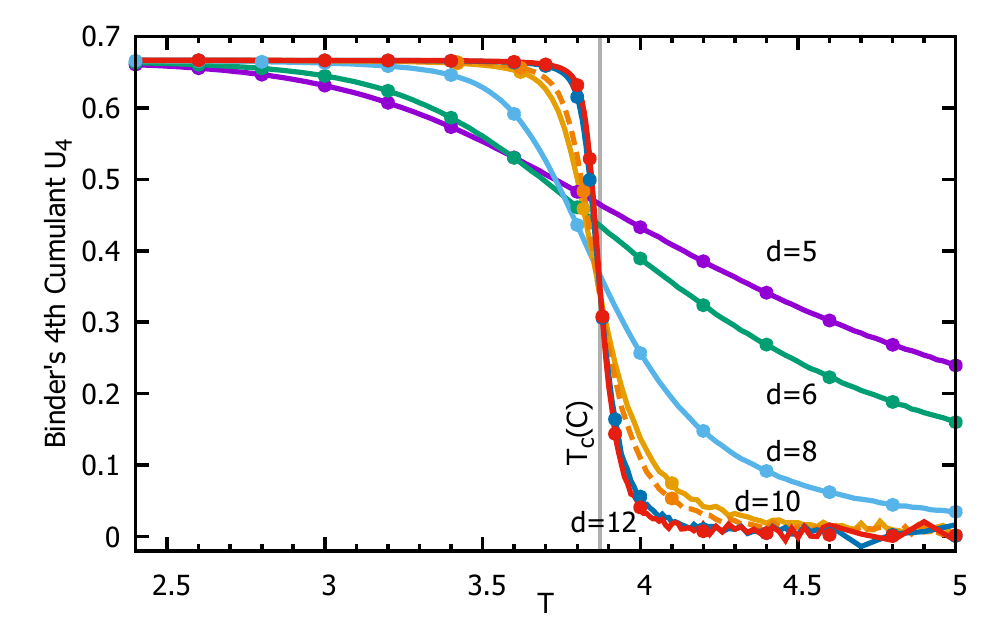}
  \caption{(Color online) Average magnetization (top) and Binder's
    fourth cumulant (bottom) as a function of temperature, for
    transitive graphs listed in Tab.~\ref{tab:graphs} with minimal
    distances as indicated.  Dashed lines show the data for the larger
    $d=10$ graph.  Lines show the data obtained using cluster updates;
    points show the data from simulations using local Metropolis
    updates.  Vertical line shows the critical temperature $T_c(C)$
    extrapolated from the positions of the specific heat maxima, see
    Fig.~\ref{fig:fit}.  While both sets of data do cross near
    $T_c(C)$, there is significant drift with increased graph size.
    In addition, the curves are near parallel which makes reliable
    extraction of the critical temperature difficult.  }
  \label{fig:M}
\end{figure}
\begin{figure}[htbp]
  \centering
  \includegraphics[width=\columnwidth]{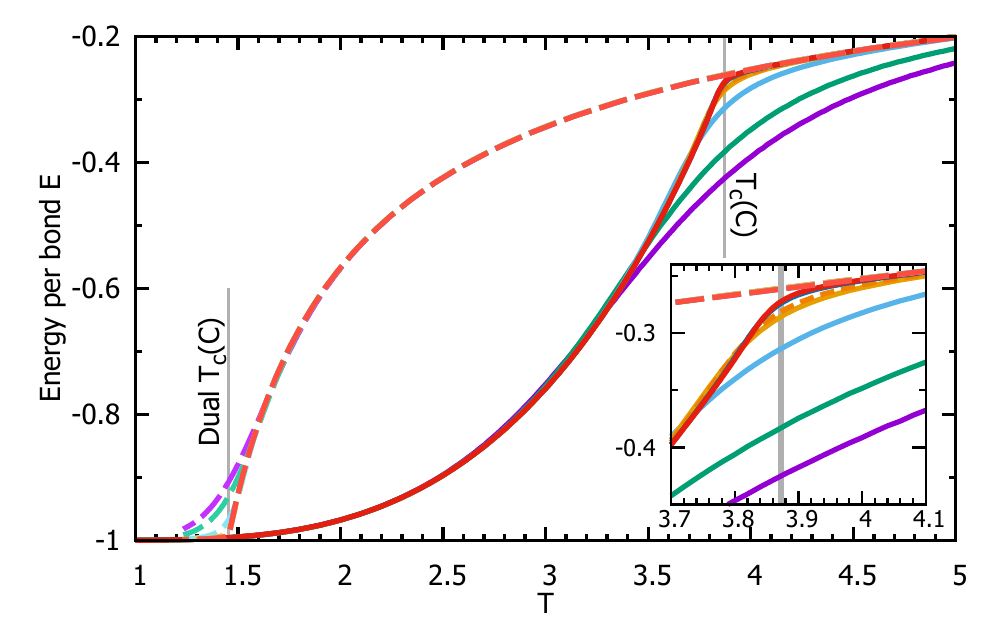}
  \caption{(Color online).  Solid lines: energy per bond from Wolff
    cluster calculations as a function of temperature, as in
    Fig.~\ref{fig:M}.  These data are converted with the help of the
    exact duality (\ref{eq:duality}) to give energies in the dual
    model (long dashes).  With increasing graph sizes, the difference
    between the original and dual energies decreases above the
    empirically found $T_c(C)$ (Fig.~\ref{fig:fit}) and below the
    corresponding Kramers-Wannier dual, $T_c^*(C)$.  Inset: close up
    of the plots near $T_c(C)$.}
  \label{fig:energy}
\end{figure}

\begin{figure}[htbp]
  \centering
  \includegraphics[width=\columnwidth]{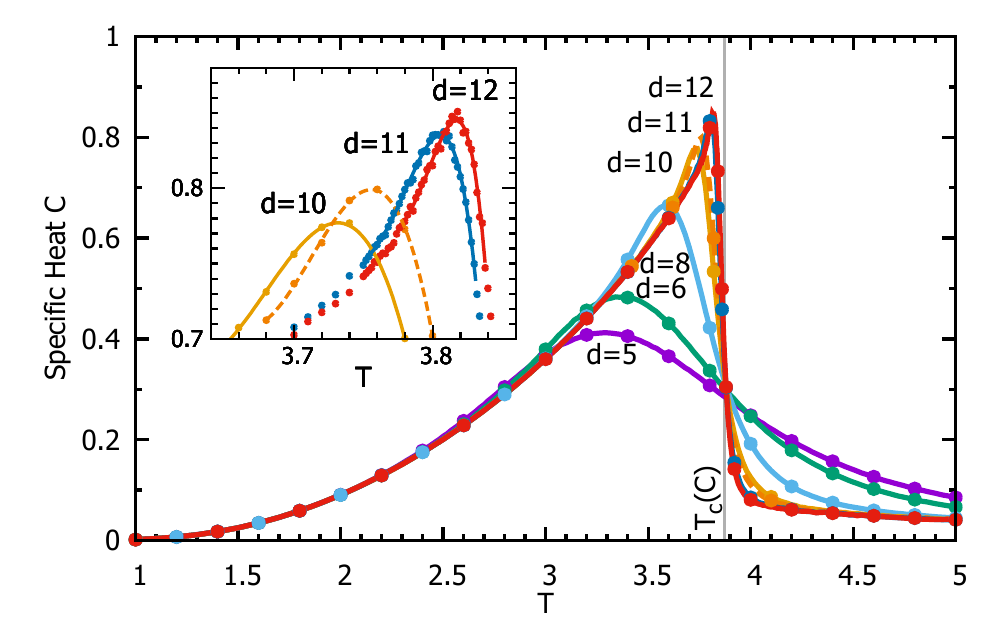}
  \caption{(Color online). As in Fig.\ \ref{fig:M} but for the
    specific heat.  Inset: fitting for maxima.  Data points in the
    inset are from the Wolff cluster calculations, while the lines are
    obtained using non-linear fits with general quartic polynomials of
    the form $y=y_m +a_2(x-x_m)^2+\ldots+ a_4(x-x_m)^4$, which give
    the coordinates of the maximum $(x_m,y_m)$ nearly independent from
    the rest of the coefficients.}
  \label{fig:C}
\end{figure}
\begin{figure}[htbp]
  \centering
  \includegraphics[width=\columnwidth]{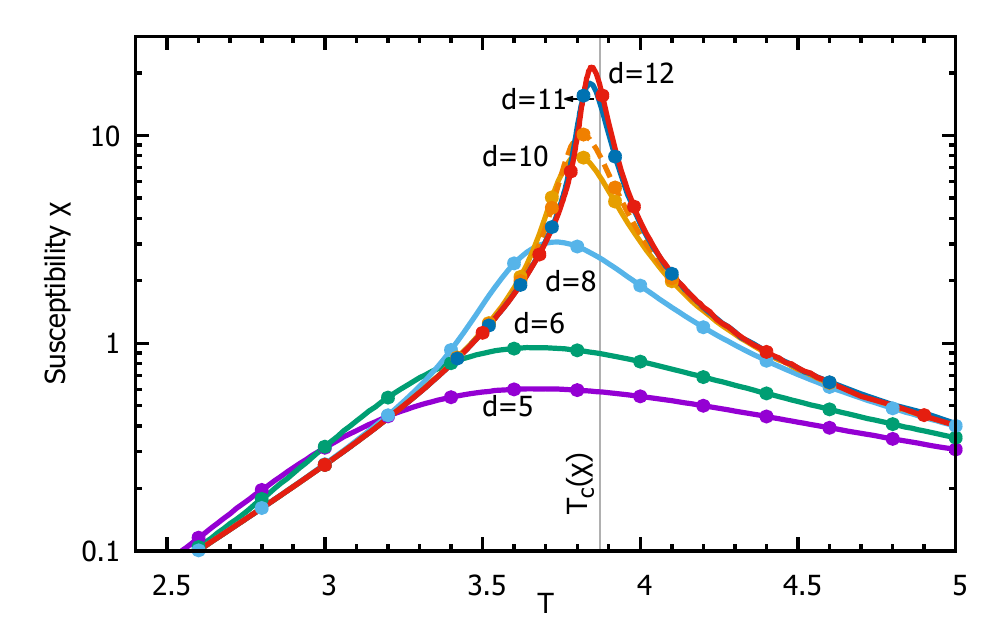}
  \caption{(Color online).  As in Fig.\ \ref{fig:M} but for the
    susceptibility $\chi(T)$, plotted in semi logarithmic scale.  The
    vertical line shows the critical temperature extrapolated from the
    susceptibility maxima, see Fig.~\ref{fig:fit}.}
  \label{fig:X}
\end{figure}

The parameters of the graphs used in the simulations are listed in
Tab.~\ref{tab:graphs}.  The first three graphs we obtained from
N.~P.~Breuckmann\cite{Breuckmann-thesis-2017}.  We generated the
remaining graphs with a custom \texttt{gap}\cite{GAP4} program, which
constructs coset tables of freely presented groups obtained from the
infinite van Dyck group $D(5,5,2)=\langle a,b|a^5,b^5,(ab)^2\rangle$
[here $a$ and $b$ are group generators, while the remaining arguments
are \emph{relators} which corresponds to imposed conditions,
$a^5=b^5=(ab)^2=1$] by adding one or more relator obtained as a pseudo
random string of generators, until a finite group is obtained.  Given
such a finite group ${\cal D}$, the vertices, edges, and faces are
enumerated by the right cosets with respect to the subgroups
$\langle a\rangle$, $\langle ab\rangle$, and $\langle b\rangle$,
respectively.  The vertex-edge and face-edge incidence matrices $G$
and $H$ are obtained from the coset tables.  Namely, non-zero matrix
elements are in positions where the corresponding pair of cosets share
an element.  Finally, the distance $d$ of the CSS code ${\cal Q}(G,H)$
was computed using the random window algorithm, which has the
advantage of being extremely fast when distance is
small\cite{Dumer-Kovalev-Pryadko-2014,Dumer-Kovalev-Pryadko-IEEE-2017}.
With the exception of the graph with $n=7440$, the graphs used have
the smallest size for the given distance.

\begin{table}[htbp]
  \centering
  \begin{tabular}[c]{c|c|c|c}
    vertices $r$ & edges $n$ & homology rank $k$ & CSS distance $d$ \\ \hline 
    32&80&18&5\\ 60&150&32&6\\360&900&182&8\\ 
    1920&4800&962&10\\ 2976&7440&1490&10\\ 
    8640&21600&4322&11\\ 12180&30450&6092&12
  \end{tabular}
  \caption{Parameters of the graphs used in the simulations.}
  \label{tab:graphs}
\end{table}

The obtained plots of magnetization and Binder's fourth cumulant are
shown in Fig.~\ref{fig:M}; the corresponding curves on largest graphs
are nearly indistinguishable, consistent with convergence at large
$n$.  We note that the crossing point in the Binder's fourth cumulant
show a significant drift with the system size, see lower plot on
Fig.~\ref{fig:M}.  This is not surprising, given that the original
scaling analysis\cite{Binder-1981} only applies to locally flat
systems, whereas the hyperbolic graphs have a uniform negative
curvature.  On both plots, the curves for larger system sizes are near
parallel to each other, which makes the identification of the phase
transition point from the corresponding crossing points difficult.

Fig.\ \ref{fig:energy} shows energy per bond as a function of
temperature.  To illustrate the properties of the specific homological
difference, see Theorems \ref{th:lts-convergence} and
\ref{th:hts-convergence-homo}, we also plot the energy per bond of the
exact dual models obtained from the same data using
$\varepsilon^*(K^*)=-\sinh(2K)\, \varepsilon(K)-\cosh(2K)$, derived
from Eq.~(\ref{eq:duality}).  The plot shows that as the size of the
graph increases, the difference between the energies
$\varepsilon^*(T)$ and $\varepsilon(T)$ decreases with increasing
graph size both above $T_c(C)$ and below the corresponding
Kramers-Wannier dual, $T_c^*(C)$, while a finite difference remains
for the intermediate temperatures.  This is consistent with the
identification $T_h^*=T_c(C)$.

The plots for specific heat $C(T)$ (Fig.~\ref{fig:C}) and magnetic
susceptibility $\chi(T)$ (Fig.~\ref{fig:X}) show well developed maxima
which become sharper and higher with increasing system sises.  Notice
that a unique point of divergence of the specific heat necessarily
coincides with the dual homological temperature $T_h^*$.

We obtained the positions of the specific heat and magnetic
susceptibility maxima by fitting the data in the vicinity of the
corresponding maxima with quartic polynomials as explained in the
caption of Fig.~\ref{fig:C}.  The resulting positions of the maxima
are plotted in Fig.~\ref{fig:fit} as a function of $x=1/n^{1/2}$.  The
error bars of the positions of the maxima have errors in the third
digit; the observed minor scattering of the data is a feature of the
corresponding graphs.

\begin{figure}[htbp]
  \centering
  \includegraphics[width=\columnwidth]{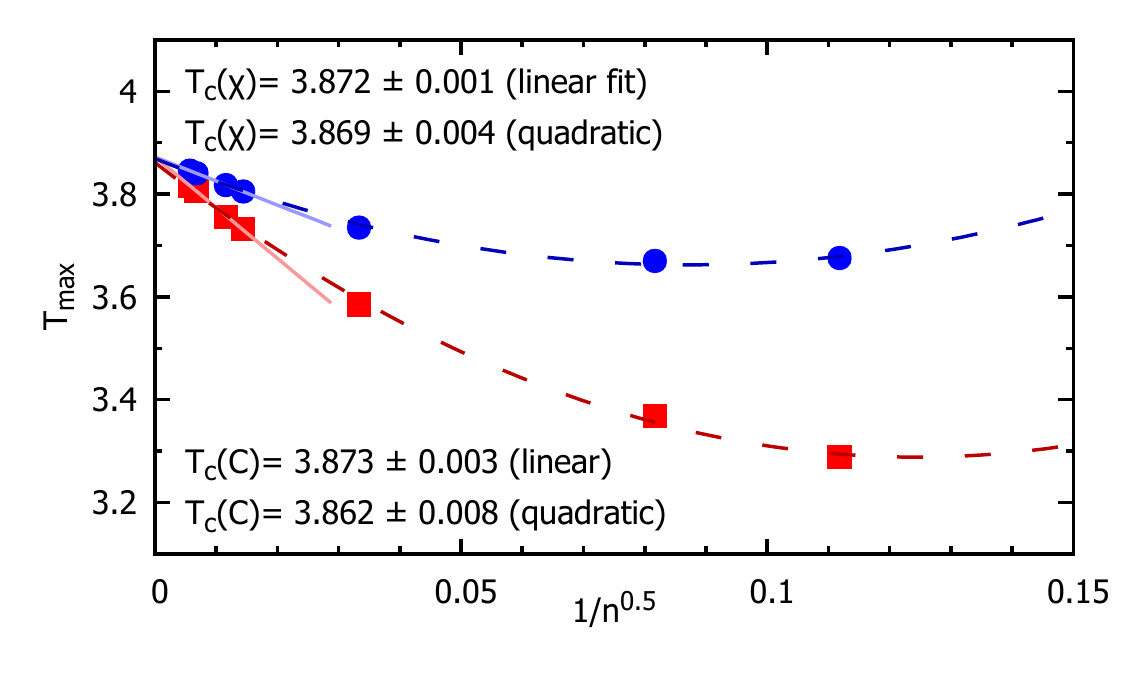}
  \caption{(Color online). Extrapolation of the specific heat and
    susceptibility maxima to infinite system size.  Red squares (blue
    circles) show the positions of the specific heat (susceptibility)
    maxima extracted from the data on Figs.~\ref{fig:C} and
    \ref{fig:X}, respectively for graphs of different size, plotted as
    a function of $1/n^{1/2}$, where $n$ is the number of edges in the
    corresponding graphs, see Tab.~\ref{tab:graphs}.  Solid (dashed)
    lines are obtained as linear (quadratic) fits to the data, where
    only the four leftmost points were used for the linear fits.  This
    results in the extrapolated critical temperature values as
    indicated.}
  \label{fig:fit}
\end{figure}

While the size dependence is not monotonic in the case of
susceptibility maxima, the data points for larger graphs show
approximately linear dependence on $x$.  Linear extrapolation to
infinite size ($x=0$) gives $T_c\approx 3.872\pm0.003$ for both sets
of data.  This value is consistent with the lower bound
(\ref{eq:multiplicity-ineq}) for the infinite graph with wired
boundary conditions, which gives in the present case $T_c\ge 2.668$.
In comparison, the transition for a square-lattice Ising model is in
the self-dual point, $T_\mathrm{s.d.}=2/\ln(1+\sqrt2)\approx 2.269$.

We note that even though we expect Ising model on hyperbolic graphs to
have mean field criticality, conventional finite size scaling theory
does not apply here.  In particular, this is seen from the absence of
the well defined crossing point in the data for Binder's fourth
cumulant, see the lower plot on Fig.~\ref{fig:M}.  Therefore, we had
to experiment on how to extrapolate the positions of the maxima to
estimate the critical temperature.  The scaling with $x=1/n^{1/2}$ was
chosen since it gives near identical estimates for the critical
temperatures from the maxima of $C(T)$ and $\chi(T)$, cut off at
different maximum sizes (we tried $d_\mathrm{max}=8$ and above).

We also note that the data shows good convergence with increased
system size, without the need for the subsequence construction
described in Sec.~\ref{sec:analyticity}.

\section{Discussion}
\label{sec:discussion}

\subsection{Summary of the results}

We considered pairs of weakly dual Ising models with few-body
couplings, defined via sequences of degree-limited bipartite coupling
graphs, with the focus on the case where the rank $k$ of the first
homology group of the corresponding two-chain complex scales
extensively with the system size.  This construction is needed to
avoid introducing the boundaries, which are known to affect the
position of the critical point in non-amenable graphs, and also to
connect to applications, e.g., in quantum information theory, where
results for large but finite systems are of interest.  Here, extensive
scaling of $k$ corresponds to quantum error correcting codes with
finite rates $R>0$.  Important examples include two-body Ising models
on families of finite transitive hyperbolic graphs which weakly
converge to regular $\{f,d\}$-tilings of the hyperbolic plane with
$df/(d+f)>2$; the corresponding limiting rates $R=1-2/d-2/f$ are
non-zero.

Our main result is Theorem \ref{th:lts-convergence}, which guarantees
the existence of a low-temperature, low-disorder region where
homological defects are frozen out---in the thermodynamical limit they
have no effect on the free energy density.  Duality guarantees the
existence of a high-temperature phase where extensive homological
defects have near zero free energy cost, see Theorem
\ref{th:hts-convergence-homo}.  At all temperatures below this phase,
the average defect tension is non-zero, see
Eq.~(\ref{eq:average-tension}).

With the help of duality and a known bound on high-order cumulants, we
established the absolute convergence of both the high- and
low-temperature series expansions of the free energy density in finite
regions which include vicinities of the real temperature axis around
the zero and infinite temperatures, respectively.  We used a
subsequence construction to ensure the convergence of free energy
density at all temperatures, and defined the critical temperatures as
the real-axis points of non-analyticity of the limiting free energy
density.  For these critical temperatures, we derived several
inequalities, in particular, an analog of multiplicity of the critical
points, which guarantees that with $R>0$, critical point of the free
energy density is affected by the summation over the topological
defects.

As an application of obtained bounds, we proved the multiplicity of
phase transitions on all regular tilings $\mathbf{H}(f,d)$ of the infinite
hyperbolic plane, $df/(d+f)>2$.

We also simulated the phase transition on a sequence of self-dual
$\{5,5\}$ transitive hyperbolic graphs without boundaries, with up to
$n_\mathrm{max}=30450$ bonds numerically.  Our data shows good
convergence with increasing system sizes, with a single specific heat
maximum which sharpens with the increasing system size.  If the
corresponding position $T_c(C)\approx 3.872\pm0.003$ is the only
singularity of the free energy, then necessarily it coincides with the
dual homological point, $T_h^*=T_c(C)$.

\subsection{Open questions}

\noindent{\bf 1.}\ The rightmost point of the homological region
established in Theorem \ref{th:lts-convergence} on the $p$-$T$ plane
has the same value $p_\mathrm{max}$ as can be also obtained using the
energy-based arguments\cite{Dumer-Kovalev-Pryadko-bnd-2015}, which
apply at $T=0$.  Either of these results also
implies\cite{Nishimori-book,Kovalev-Pryadko-SG-2015} that the portion
of the Nishimori line at $p<p_\mathrm{max}$ is in the homological
region.  It should be possible to establish the existence of a
homological region in the intermediate temperature points, but we
could not find the corresponding arguments.

\noindent{\bf 2.}\ The proof of Statement \ref{th:hts-convergence} is
based on overly generic bounds\cite{Feray-Meliot-Nikeghbali-2013} for
cumulants of a sum of random variables with a given dependency graph.
In the case of the Ising model, it should be possible to construct a
stronger lower bound for absolute convergence of the HTS.  We expect
that the same bound as in Theorem \ref{th:hts-convergence-homo} should
apply.  Such a bound would be consistent with that from
high-temperature series expansions for spin
correlations\cite{Fisher-upper-1967}, and it would also be consistent
with the analysis of the higher-order derivatives of free
energy\cite{Lebowitz-1972}, as well as the naive expectation that
$T_c(G)=T_h^*(H,G)$.

\noindent{\bf 3.}\ In addition to the case in Remark
\ref{th:hts-analytic-seq}-2, the infinite subsequence construction of
Corollary \ref{th:hts-analytic-seq} is also not needed when the
sequence of Tanner graphs has a well defined distributional limit
(Benjamini-Schramm or ``left''
convergence\cite{Benjamini-Schramm-2001,%
  Borgs-Chayes-Kahn-Lovasz-2013,Lovasz-2016}).  Important examples are
given by the Tanner graphs of hypergraph-product and related
codes\cite{Tillich-Zemor-2009,Kovalev-Pryadko-Hyperbicycle-2013} based
on specific families of sparse random matrices.  For such sequences,
it would be nice to establish the conditions for convergence of the
free energy density or spin averages for all $K>0$, to supercede the
subsequence construction of Lemma \ref{th:subsequence-lemma}.

\begin{acknowledgments}
  LPP is grateful to N. P. Breuckmann for the explanation of the
  quotient group construction of the hyperbolic graphs.  This work was
  supported in part by the NSF under Grants No.\ PHY-1415600 (AAK),
  ECCS-1102074 (ID), and PHY-1416578 (LPP).  LPP also acknowledges
  hospitality by the Institute for Quantum Information and Matter, an
  NSF Physics Frontiers Center with support of the Gordon and Betty
  Moore Foundation.
\end{acknowledgments}

\appendix 

\section{Proof of Theorem \ref{th:lts-convergence}}
\label{app:ths-convergence-proof}
\ltsconvergence*

The statement of the theorem immediately follows from the following
technical Lemma, see the proof in
Ref.~\onlinecite{Kovalev-Prabhakar-Dumer-Pryadko-2018}
\begin{lemma}
  \label{th:homological-bound}  
  Consider a pair of Ising models defined in terms of weight-limited
  matrices $G$ and $H$ with orthogonal rows, such that the matrix $H$
  has a maximum row weight $m$.  Let $d_G$ denote the CSS distance
  (\ref{eq:CSS-dist}), the minimum weight of a frustration-free
  homologically non-trivial defect
  $\mathbf{c}\in{\cal C}_H^\perp\setminus{\cal C}_G$.  Denote
  $S\equiv e^{-2K}(1-p)+e^{2K}p$, and assume that 
  $ (m-1)\, S<1$.  Then, the average homological difference
  (\ref{eq:homo-difference}) satisfies
  \begin{equation}
[\Delta f(G,H;K)]_p\le {(m-1)^{d_G} S^{d_G+1}\over
    1-(m-1)\,S} .\label{eq:homological-bound}
\end{equation}
\end{lemma}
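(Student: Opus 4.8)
The plan is to peel the homological difference apart into the contributions of individual defect classes, bound each one using a low-temperature expansion together with the independence of the quenched disorder, and then resum with a Peierls-type estimate.

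First I would reduce $\Delta f_{\mathbf e}$ to a sum over defects. By construction $H^*$ is $G$ with $k$ extra rows equal to a set of linearly independent inequivalent generators $\mathbf c_1,\dots,\mathbf c_k$ of $\mathcal{C}_H^\perp/\mathcal{C}_G$; writing $C$ for the $k\times n$ matrix with these rows and summing the partition function $Z_{\mathbf e}(H^*;K)$ over the $2^{k}$ configurations of the $k$ additional spins gives $Z_{\mathbf e}(H^*;K)=\sum_{\mathbf a\in\Ftwo^{k}}Z_{\mathbf e+\mathbf a C}(G;K)$, hence
\[
\Delta f_{\mathbf e}(G,H;K)=\frac1n\ln\Bigl(1+\sum_{\mathbf c}\frac{Z_{\mathbf e+\mathbf c}(G;K)}{Z_{\mathbf e}(G;K)}\Bigr),
\]
where $\mathbf c$ runs over representatives of the $2^{k}-1$ nontrivial classes in $\mathcal{C}_H^\perp\setminus\mathcal{C}_G$; by the gauge symmetry (\ref{eq:gauge-transformation}) each ratio depends only on the class. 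Averaging over disorder, concavity of $\ln$ (Jensen) followed by $\ln(1+x)\le x$ gives $[\Delta f_{\mathbf e}]_p\le\frac1n\sum_{\mathbf c}[\,Z_{\mathbf e+\mathbf c}(G;K)/Z_{\mathbf e}(G;K)\,]_p$, so it remains to control the disorder-averaged ratio of a single class.

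Next I would expand $Z_{\mathbf e}(G;K)$ over the broken-bond patterns reachable by spin flips, $Z_{\mathbf e}(G;K)\propto\sum_{\mathbf x\in\mathcal{C}_G}e^{-2K\wgt(\mathbf e+\mathbf x)}$, so that $Z_{\mathbf e+\mathbf c}(G;K)/Z_{\mathbf e}(G;K)=\sum_{\mathbf y\in\mathbf c+\mathcal{C}_G}e^{-2K\wgt(\mathbf e+\mathbf y)}\big/\sum_{\mathbf x\in\mathcal{C}_G}e^{-2K\wgt(\mathbf e+\mathbf x)}$. Bounding the denominator below by its $\mathbf x=\mathbf 0$ term $e^{-2K\wgt\mathbf e}$, using $\wgt(\mathbf e+\mathbf y)-\wgt\mathbf e=\wgt\mathbf y-2\,|\mathrm{supp}\,\mathbf e\cap\mathrm{supp}\,\mathbf y|$, and exploiting that the components $e_b$ are independent, the disorder average factorizes over $\mathrm{supp}\,\mathbf y$ with each bond contributing the factor $S=e^{-2K}(1-p)+e^{2K}p$, which gives $[\,Z_{\mathbf e+\mathbf c}(G;K)/Z_{\mathbf e}(G;K)\,]_p\le\sum_{\mathbf y\in\mathbf c+\mathcal{C}_G}S^{\wgt\mathbf y}$. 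Summing over the nontrivial classes collapses to $\sum_{\mathbf y\in\mathcal{C}_H^\perp\setminus\mathcal{C}_G}S^{\wgt\mathbf y}$, in which every term has $\wgt\mathbf y\ge d_G$ by the definition of the CSS distance. Since the rows of $H$ have weight $\le m$, a branching (Peierls) count in the Tanner graph of $H$ bounds the number of weight-$w$ ``generalized domain walls'' rooted at a given bond by $(m-1)^{w}$; summing over the $\le n$ possible roots, dividing by $n$, and summing the geometric series $\sum_{w\ge d_G}(m-1)^{w}S^{w+1}$, which converges precisely when $(m-1)S<1$, produces the stated bound.

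The main obstacle is this last (counting) step: there may be exponentially many nontrivial classes, yet one has to show that their combined contribution is nevertheless governed by the single smallest weight $d_G$ — this is a genuine low-temperature Peierls-type estimate, and $(m-1)S<1$ is exactly the condition that makes the relevant polymer series converge (and the resulting bound is correspondingly not expected to be tight). A secondary subtlety lies in the disorder average: one must fold the quenched randomness in through the independence of the bond flips so that each defect bond costs a factor $S$ on average, rather than the worst-case factor $e^{2K}$ that a naive term-by-term comparison of Boltzmann weights would give.
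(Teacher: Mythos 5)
Your opening reductions are fine and match the way the paper itself manipulates this quantity elsewhere: $Z_{\mathbf e}(H^*;K)=\sum_{\mathbf a}Z_{\mathbf e+\mathbf a C}(G;K)$, Jensen plus $\ln(1+x)\le x$, the low-temperature rewriting $Z_{\mathbf e}(G;K)\propto\sum_{\mathbf x\in\mathcal{C}_G}e^{-2K\wgt(\mathbf e+\mathbf x)}$, and the observation that the i.i.d.\ disorder average of $\prod_{b\in\mathrm{supp}\,\mathbf y}e^{-2K(1-2e_b)}$ factorizes into $S^{\wgt\mathbf y}$ are all correct. (For the record, the paper does not prove this Lemma itself; it cites the companion reference, so the comparison below is with what a correct proof must contain, not with a proof printed here.)

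The argument breaks at the step you yourself flag as the main obstacle, and it cannot be repaired as stated. After you bound the denominator $\sum_{\mathbf x\in\mathcal{C}_G}e^{-2K\wgt(\mathbf e+\mathbf x)}$ below by its single $\mathbf x=\mathbf 0$ term, you are left with $\frac1n\sum_{\mathbf y\in\mathcal{C}_H^\perp\setminus\mathcal{C}_G}S^{\wgt\mathbf y}$, and you then count the weight-$w$ vectors in this set as if they were connected ``walls'', at most $n(m-1)^{w}$ of them. But the coset $\mathbf c+\mathcal{C}_G$ contains $\tilde{\mathbf c}+\mathbf x$ for \emph{every} $\mathbf x\in\mathcal{C}_G$, and these are generically disconnected: take $\tilde{\mathbf c}$ a minimal nontrivial codeword and add any subset of $\Omega(n)$ pairwise-disjoint rows of $G$ supported away from $\tilde{\mathbf c}$ (such rows exist by sparsity, and $r/n$ is bounded away from zero in the models of interest). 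This alone gives
\begin{equation*}
\sum_{\mathbf y\in\mathbf c+\mathcal{C}_G}S^{\wgt\mathbf y}\;\ge\;S^{\wgt\tilde{\mathbf c}}\,(1+S^{m'})^{\Omega(n)},
\end{equation*}
which grows exponentially with $n$ for any fixed $S>0$, so your intermediate inequality, while true, is exponentially weaker than the claimed right-hand side, and the bound ``$\le n(m-1)^{w}$ weight-$w$ elements'' is simply false for $\mathcal{C}_H^\perp\setminus\mathcal{C}_G$ (membership in $\mathcal{C}_H^\perp$ forces even overlap with each row of $H$ componentwise, not connectivity). Physically, $\Delta f$ is small at low temperature only because the homologically trivial excitations appearing in $Z_{\mathbf e+\mathbf c}$ are cancelled by the identical excitations in $Z_{\mathbf e}$; discarding the denominator destroys exactly this cancellation. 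A correct low-temperature argument must keep the denominator, e.g., by pairing each numerator term $\mathbf y$ with the denominator term $\mathbf y+\mathbf c_{\mathbf y}$, where $\mathbf c_{\mathbf y}$ is a connected, irreducible, homologically nontrivial component of $\mathbf y$ (every $\mathbf y\in\mathcal{C}_H^\perp\setminus\mathcal{C}_G$ has one, of weight at least $d_G$), so that the uncancelled per-bond factors, and hence the disorder average $S$ per bond, run only over $\mathrm{supp}\,\mathbf c_{\mathbf y}$; only then does the $(m-1)$-branching count of \emph{connected} codewords through a given bond apply and produce the geometric series $\propto (m-1)^{d_G}S^{d_G+1}/[1-(m-1)S]$.
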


\section{Proof of inequalities in Sec.~\ref{sec:general}}

\noindent(\textbf{i}) 
The proof of the monotonicity of the
homological difference (in the absence of flipped bonds),
$$ 
{d\over dK} \Delta
f_\mathbf{0}(G,H;K)\le0,\eqno(\ref{eq:monotonicity}),
$$
is similar to the proof\cite{Bricmont-Lebowitz-Pfister-1980} of the
monotonicity of the tension.  We combine the logarithms in
Eq.~(\ref{eq:homo-difference}), decompose $Z_\mathbf{e}(H^*;K)$ as a
sum of $Z_\mathbf{c}(G;K)$ over non-equivalent codewords $\mathbf{c}$,
and write
$$
{d\over
  dK}{Z_\mathbf{c}(G;K)\over
  Z_\mathbf{0}(G;K)}={Z_\mathbf{c}(G;K)\over
  Z_\mathbf{0}(G;K)}\sum_{b\in\mathcal{B}}\left(\langle
  R_b\rangle_\mathbf{c}-
\langle R_b\rangle_\mathbf{0}\right)\le 0.
$$
The desired inequality (\ref{eq:monotonicity}) follows from the
monotonicity of the logarithm. \smallskip

\noindent(\textbf{ii})
The first  inequality in 
\begin{displaymath}
  | \tau_\mathbf{c,e}|\le\tau_\mathbf{c,0}\le2K
  \eqno(\ref{eq:tension-generic-bounds})  
\end{displaymath}
follows from the second GKS
inequality\cite{Griffiths-1967,Kelly-Sherman-1968} applied in the dual
system [where, according to electric-magnetic duality, the defect
becomes an average of the corresponding product of spins, see
Eq.~(\ref{eq:em-duality})].  Depending on the sign of
$\tau_\textbf{c,e}$, duality gives
$\langle R_\mathbf{c+e}\rangle\ge \langle
R_\mathbf{e}\rangle \langle R_\mathbf{c}\rangle$ or
$\langle R_\mathbf{e}\rangle\ge \langle R_\mathbf{c}\rangle \langle
R_\mathbf{e+\mathbf{c}}\rangle$, where $R_\mathbf{e}$ is the product
of bonds corresponding to non-zero bits in the binary vector
$\mathbf{e}$.  The second inequality, in a more general form,
$\tau_\mathbf{e}\equiv\tau_\mathbf{e,0}\le 2K$, follows from the Gibbs
inequality 
$$
F_\mathbf{e}(G;K)-F_\mathbf{0}(G;K)\le 2K\sum_{b:e_b\neq0} \langle
R_b\rangle_{G;K}\le 2K \wgt(\mathbf{e}),
$$
if we take a  minimal-weight vector equivalent to $\mathbf{e}$, in
which case $\wgt(\mathbf{e})=d_\mathbf{e}$.

\noindent(\textbf{iii}) To prove the lower bound on the average
tension,
$$
\zeta \bar \tau_p
  \ge R\ln 2-[\Delta
  f_\mathbf{e}]_p,\eqno(\ref{eq:average-tension})
$$
we first define the constant $\zeta$ as the average minimum weight of
all $2^k$ codewords divided by the code length $n$,
\begin{equation}
\zeta=(2^kn)^{-1}\sum_{\mathbf{c}}d_\mathbf{c}.\label{eq:zeta}
\end{equation}
An upper bound on
$\zeta$ can be obtained if we take the codewords $\mathbf{c}$ as
linear combinations of $k$ inequivalent codewords $\mathbf{c}_i$,
$i\in\{1,\ldots,k\}$ (it is likely that smaller-weight equivalent
codewords can be found).  In this case the codewords form a binary 
code, and the average weight is exactly a half of the length $n'$ of
the code\cite{MS-book}, where $n'=\left|\cup_{i=1}^k I({\bf c}_i)\right|$ is
the weight of the union of the supports of the basis codewords.
Clearly, $n'\le n$, which gives $\zeta\le 1/2$.  Combining with a
lower bound on the weight of non-trivial codewords,
$d_\mathbf{c}\ge d_{\cal G}$, $\mathbf{c}\not\simeq\mathbf{0}$, we
obtain
\begin{equation}
  \label{eq:zeta-constant}
  {1-2^{-k}\over n}d_G\le \zeta\le  {1\over2}. 
\end{equation}
We now proceed with deriving the inequality
(\ref{eq:average-tension}).  Start by expanding
$Z_\mathbf{e}(H^*;K)=\sum_\mathbf{c}Z_\mathbf{e+c}(G;K)$, where the
summation is over all $2^k$ mutually inequivalent codewords
$\mathbf{c}$.  Each of the terms with $\mathbf{c}\not\simeq\mathbf{0}$
can be written in terms of the corresponding tension
(\ref{eq:disorder-averaged-tension}),
$$
Z_\mathbf{e+c}(G;K)=e^{-\tau_\mathbf{c,e}(G;K)d_\mathbf{c}}Z_\mathbf{e}(G;K).
$$ 
Convexity of the exponent gives
\begin{eqnarray*}
{Z_\mathbf{e}(H^*;K)\over Z_\mathbf{e}(G;K)}&=&1+\sum_\mathbf{c\not\simeq0}
\exp(-\tau_\mathbf{c,e}d_c) \\ &\ge& 2^k
\exp\Biglb(-{2^{-k}}\sum_\mathbf{c}
\tau_{\bf c,e}d_\mathbf{c}\Bigrb),
\end{eqnarray*}
where for the trivial codeword $\mathbf{c}\simeq\mathbf{0}$ we set
$\tau_{\bf 0,e}d_\mathbf{0}=0$.  Taking the logarithm and rewriting
the sum over codewords in terms of the weighted average, with the help
of Eq.~(\ref{eq:zeta}) we obtain
$$
\Delta F_\mathbf{e}(G,H;K)\ge
k\ln2-\zeta n {\sum_\mathbf{c\not\simeq0} \tau_\mathbf{c,e}d_\mathbf{c}\over
  \sum_\mathbf{c\not\simeq0}d_\mathbf{c}}. 
$$
Eq.~(\ref{eq:average-tension}) trivially follows after averaging over
disorder and dividing by $n$.

\noindent(\textbf{iv})
The inequality 
$$
   K_h(G,H)-K_h^*(H,G)\ge R\ln2\eqno(\ref{eq:ineq-three})
$$
is based on the standard inequality for the derivative of the free
energy density, which is just the average energy per bond.  For the
case of homological difference we obtain, instead,
\begin{equation}
{d\over dK} \Delta f(G,H;K)={1\over n}\sum_{b\in\mathcal{B}}\left(\langle
R_b\rangle_{H^*;K}-\langle
R_b\rangle_{G;K}\right).\label{eq:energy-bound}
\end{equation}
The second term can be obtained from the first by freezing the spins
corresponding to homologically non-trivial defects; with the help of
GKS inequalities we obtain
$$
1\ge \langle
R_b\rangle_{G;K}\ge \langle
R_b\rangle_{H^*;K}\ge 0,
$$
which guarantees the derivative (\ref{eq:energy-bound}) to be between
$-1$ and $0$.  Integration gives the inequality
$$
\Delta f_t(K_2)-\Delta f_t(K_1)\le K_1-K_2,
$$
where $\Delta f_t(K)=\Delta f(G_t,H_t;K)$. 
We now take $K_1=K_h(G,H)$ and $K_2=K_h^*(H,G)$, so that in
the limit of the sequence, $\lim_t\Delta f_t(K_1)=0$ and $\lim_t\Delta
f_t(K_2)=R\ln2$.  Eq.~(\ref{eq:ineq-three})  trivially follows.

\section{Proof of Theorem \ref{th:hts-convergence-homo}}

\htsconvergencehomo*

\begin{proof}
  The proof is based on the special case of Theorem
  \ref{th:lts-convergence} in the absence of disorder, $p=0$, and the
  duality relation (\ref{eq:homological-diff-duality}), applied for
  each pair of matrices, $G_t$ and $H_t$, with $R_t=k_t/n_t$, and $K$
  replaced with its Kramers-Wannier dual, $K^*$.  The condition on $K$
  in Theorem \ref{th:lts-convergence} (with $G_t$ and $H_t$
  interchanged) becomes simply $(m-1)\tanh K<1$.  Convergence of
  sequences $\Delta f_\mathbf{0}(H_t,G_t;K^*)$ to $0$ and $R_t$ to $R$
  implies that of the sequence $\Delta f_\mathbf{0}(G_t,H_t;K)$ to
  $R\ln2$.
  \end{proof}

\section{Proof of Statement \ref{th:hts-convergence}}

The proof is based on Theorem 9.1.7 from
Ref.~\onlinecite{Feray-Meliot-Nikeghbali-2013}, which bounds cumulants
of a random variable $X$,
\begin{equation}
  \label{eq:cumulant}
\kappa_r(X)\equiv {d^r\over dt^r}\,\ln \mathbb{E}\left(e^{t
    X}\right)\biggr|_{t=0},  \;\,r\in\{0,1,\ldots\},
\end{equation}
where $X=\sum_{\alpha\in {\cal S}} Y_\alpha$ is a sum of random
variables with a given \emph{dependency graph}:
\begin{definition}
  A graph ${\cal D}$ with vertex set ${\cal S}$ is called a
  \emph{dependency graph} for the set of random variables
  $\{Y_\alpha,\alpha\in {\cal S}\}$ if for any two disjoint subsets
  ${\cal S}_1$ and ${\cal S}_2$ of ${\cal S}$, such that there are no
  edges in ${\cal D}$ connecting an element of ${\cal S}_1$ and an
  element of ${\cal S}_2$, the sets of random variables
  $\{Y_\alpha\}_{\alpha\in{\cal S}_1}$ and
  $\{Y_\alpha\}_{\alpha\in{\cal S}_2}$ are independent.
\end{definition}
The corresponding bound reads as follows:
\begin{lemma}[Theorem 9.1.7 from Ref.\
  \onlinecite{Feray-Meliot-Nikeghbali-2013}] 
  \label{th:cumulant-bound}
  Let $\{Y_\alpha\}_{\alpha\in{\cal S}}$ be a family of random
  variables with dependency graph ${\cal D}$.  Denote $N=|{\cal S}|$
  the number of vertices of ${\cal D}$ and $\Delta$ the maximal degree
  of ${\cal D}$.  Assume that the variables $Y_\alpha$ are uniformly
  bounded by a constant $A$.  Then, for the sum
  $X=\sum_{\alpha\in{\cal S}}Y_\alpha$, and for any $s\in\{0,1,\ldots\}$,
  one has
  \begin{equation}
    \label{eq:cumulant-bound}
    |\kappa_s(X)|\le 2^{s-1} s^{s-2} N (\Delta+1)^{s-1}A^s.
  \end{equation}  
\end{lemma}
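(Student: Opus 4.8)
The plan is to reduce the bound on $\kappa_s(X)$ to a purely combinatorial estimate on the dependency graph $\mathcal{D}$, via the cumulant (moment) method. First I would invoke the multilinearity of joint (mixed) cumulants to expand the $s$-th cumulant of the sum as
\[
\kappa_s(X)=\sum_{(\alpha_1,\dots,\alpha_s)\in\mathcal{S}^s}\kappa(Y_{\alpha_1},\dots,Y_{\alpha_s}),
\]
so that the task reduces to bounding the individual joint cumulants of the building blocks $Y_\alpha$ and to counting which terms are nonzero. Here $\kappa(\cdot)$ is the joint cumulant defined through the mixed derivatives of $\ln\mathbb{E}[e^{\sum t_\alpha Y_\alpha}]$ at $t=0$, consistent with Eq.~(\ref{eq:cumulant}).

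The key structural input is the vanishing property of joint cumulants: if a family of random variables splits into two nonempty mutually independent subfamilies, their joint cumulant is zero. This follows because $\ln\mathbb{E}[e^{\sum t_\alpha Y_\alpha}]$ splits additively over independent groups, so any mixed derivative coupling a variable from each group annihilates both additive pieces. Applying this through $\mathcal{D}$: whenever the (distinct-)vertex set $\{\alpha_1,\dots,\alpha_s\}$ fails to induce a connected subgraph, one may cut it along a component boundary, where by definition of the dependency graph the two resulting groups are independent, so $\kappa(Y_{\alpha_1},\dots,Y_{\alpha_s})=0$. Hence only tuples whose support induces a connected subgraph of $\mathcal{D}$ survive the sum.

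Two quantitative estimates then remain. For the magnitude of a surviving joint cumulant I would use the M\"obius (moment-to-cumulant) expansion over the partition lattice, $\kappa(Z_1,\dots,Z_s)=\sum_{\pi}(-1)^{|\pi|-1}(|\pi|-1)!\prod_{B\in\pi}\mathbb{E}[\prod_{i\in B}Z_i]$, together with the uniform bound $|Y_\alpha|\le A$, which gives $|\mathbb{E}[\prod_{i\in B}Z_i]|\le A^{|B|}$ and hence control of each cumulant by $A^s$ times a numerical partition sum. For the number of connected tuples I would use a spanning-tree argument: every connected support admits a spanning tree on the position set $\{1,\dots,s\}$, of which there are $s^{s-2}$ by Cayley's formula; rooting such a tree and assigning indices outward, the root gives $N$ choices while each further vertex must map to the image of its parent or to one of that image's at most $\Delta$ neighbours, i.e.\ at most $\Delta+1$ choices, producing the factor $N(\Delta+1)^{s-1}$.

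The main obstacle is combining these two estimates without losing the constant. A naive product of the number of connected tuples with the maximal joint cumulant overshoots the target $2^{s-1}s^{s-2}N(\Delta+1)^{s-1}A^s$ of Eq.~(\ref{eq:cumulant-bound}), because the partition weight $\sum_{\pi}(|\pi|-1)!$ grows faster than $2^{s-1}$. The delicate step, and the heart of the argument, is therefore to organize the double sum over tuples and over partitions jointly---assigning to each connected, partitioned tuple a compatible spanning tree and exploiting the alternating-sign cancellations in the M\"obius expansion---so that the tree count $s^{s-2}$ and the branching factor $(\Delta+1)^{s-1}$ are each produced exactly once while the remaining numerical weight collapses to $2^{s-1}$. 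This bookkeeping is precisely the content of the cited combinatorial estimate, and carrying it out is where the real work lies; the expansion, the connectivity vanishing, and the degree-based counting are the routine scaffolding around it.
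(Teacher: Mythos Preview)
The paper does not prove this lemma at all: it is quoted verbatim as Theorem~9.1.7 from Ref.~\onlinecite{Feray-Meliot-Nikeghbali-2013} and then \emph{applied} to establish Statement~\ref{th:hts-convergence}. There is therefore no ``paper's own proof'' to compare your proposal against; the authors treat the bound as a black box.

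That said, your sketch is a faithful outline of the argument in the cited source. The multilinearity expansion, the vanishing of joint cumulants on disconnected supports of $\mathcal{D}$, and the spanning-tree counting via Cayley's formula with the $(\Delta+1)^{s-1}$ branching factor are exactly the ingredients used there. You also correctly flag the genuine difficulty: simply multiplying the number of connected tuples by the crude bound $\sum_\pi(|\pi|-1)!$ on a single joint cumulant overshoots, and the work lies in organizing the double sum over partitions and spanning trees so the residual numerical constant is $2^{s-1}$. In the F\'eray--M\'eliot--Nikeghbali proof this is handled by first passing to a spanning-tree inequality for the joint cumulant itself (their key lemma bounds $|\kappa(Y_{\alpha_1},\dots,Y_{\alpha_s})|$ by a sum over spanning trees of a product of covariances along edges, each edge contributing a factor $2A^2$), rather than by bounding the cumulant and the tuple count separately. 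Your proposal gestures at this but does not make the mechanism explicit; if you were to complete the argument, that per-cumulant spanning-tree bound is the step you would need to supply.
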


\htsconvergence*

\begin{proof}[Proof of Statement \ref{th:hts-convergence} ] 
  The $s$-th coefficient of the HTS for the free energy
  $F(\Theta;K,h)$ is the scaled cumulant $-\kappa_s(X)/s!$, where
  $X=J\sum_{b\in{\cal B}}R_b+h' \sum_{v\in{\cal V}}S_v$.  Define the
  set of random variables $Y_\alpha$ as the union of the set of
  (scaled) spins $h S_v$ and bonds $K R_b$, then
  $|Y_\alpha|\le A\equiv \max(|h'|,|J|)$.  The corresponding
  dependency graph ${\cal D}$ can be obtained from the bipartite graph
  defined by the matrix $\Theta$ by connecting any pair of nodes for
  bonds which share the same spin.  In the original bipartite graph,
  each spin node has up to $\ell$ neighboring bond nodes, and each
  bond node has up to $m$ neighboring spin nodes.  In the modified
  graph, each bond node also connects with up to $(\ell-1)m$ bond
  nodes with common spins, which gives the total maximum degree of
  $\Delta=\ell m$.  We also have $N=|{\cal V}|+|{\cal E}|=r+n$,
  dividing by $n$ as appropriate for the free energy density we obtain
  the bound in part (a).  With $h=0$, we can drop the spin nodes from
  the dependency graph.  In this case the maximum degree is
  $\Delta'=(\ell-1)m$, which gives the result in part (\textbf{b}).
  Notice that in this case $N=n$, and the factor $C=(r/n+1)$ is
  replaced with $C'=1$.
\end{proof}

\section{Proof of Corollary \ref{th:hts-analytic-seq}.}

\htsanalyticseq*

\begin{proof}
  The result in Statement \ref{th:hts-convergence}(b) gives a uniform
  in $t$ bound on the coefficients of the HTS,
  \begin{eqnarray}\nonumber {|\kappa_s(\Theta_j)|\over s!}  
    &\le& 
          {2^{s-1} s^{s-2} (\Delta+1)^{s-1}J^s\over (2\pi
          s)^{1/2}(s/e)^s}\\ 
    &=&
        {1\over \sqrt{8\pi}(\Delta+1)}
        {[2e J\,(\Delta+1)]^s\over
        s^{5/2}},
        \label{eq:upper-bound-kappa-fe}
  \end{eqnarray} where $\Delta\equiv (\ell-1)m$ and we used the lower
  bound by Stirling, $r!\ge (2\pi 
  r)^{1/2}(r/e)^r$.  The bound (\ref{eq:upper-bound-kappa-fe}) is
  uniform in the sequence index $j\in\mathbb{N}$.  
  Thus one can select an infinite subsequence
  of ${\Theta}_j$, ${\Theta}_{j'(t)}$, $t\in\mathbb{N}$, where the
  function $j':\mathbb{N}\to\mathbb{N}$ is strictly increasing, so
  that the coefficients $\kappa_m(\Theta_{j'(t)})$ for $m=1$ converge
  with $t$.  Selecting an infinite subsequence of the one obtained
  previously to 
  ensure the convergence of the coefficients $\kappa_m$ for
  $m=2, 3,\ldots$, at each step we obtain an infinite subsequence such
  that all coefficients $\kappa_s$ with $s\le m$ converge with $t$.
  The statement in part 
  (\textbf{a}) is obtained in the limit of $m\to\infty$.  The uniform bound 
  (\ref{eq:upper-bound-kappa-fe}) also applies to the cumulants after
  we take the limit of the obtained subsequence,
  which implies absolute convergence (and thus analyticity of the
  limit) of the HTS 
  for free energy 
  density in the circle $|K|\equiv |\beta|J\le
  \{2e[(\ell-1)m+1]\}^{-1}$, which is exactly the statement  in part
  (\textbf{b}). 
\end{proof}

\section{Proof of Lemma \ref{th:subsequence-lemma}}
\label{app:subsequencelemma}

\subsequencelemma*

\begin{proof}
  For any $t$, the free energy density
  $f_t(K)=-n_t^{-1}\ln Z_\mathbf{0}(G_t,K)$ is bounded from both
  sides, 
  $$-M\le r_t\ln2/n_t-K\le f_t(K)\le r_t\ln2/n_t+K\le \ln2+M.$$  Therefore, we
  can use a subsequence construction to ensure convergence in any
  point $K\in I_M$.  Since the set of rational numbers $\mathbb{Q}$ is
  countable, we can repeat this construction sequentially on all
  rational points in $I_M$. The resulting infinite sequence $f_i(K)$
  converges in any rational point $K\in I_M\cap \mathbb{Q}$.  Further,
  the derivative of $f_i(K)$ is uniformly bounded,
  $-1\le f_i'(K)\le 0$.  Since the sequence converges on a dense
  subset of $I_M$, this guarantees the existence and the continuity of
  the limit in the entire interval.  Finally, each of $f_i(K)$ is
  concave and non-increasing; these properties survive the limit,
  although the resulting function may not necessarily be strictly
  concave.  Concavity guarantees the existence of one-sided
  derivatives.  The lower and upper bounds on these derivatives are
  inherited from those for $f_i'(K)$.
\end{proof}

\section{Proof of Theorem \ref{th:temperature-inequalities}}
\label{app:temperature-inequalities}

\temperatureinequalities*

\begin{proof} 
  There are three mutually exclusive possibilities: (a)
  $T_c'(G)<T_h(G,H)$, (b) $T_c'(G)>T_h(G,H)$, and (c)
  $T_c'(G)=T_h(G,H)$.  In the case (a), $T_c^*(H)=T_c'(G)$, since the
  functions $f_G(K)$ and $f_{H^*}(K)$ coincide in the homological
  region, i.e., for $K>K_h(G,H)$; Eq.~(\ref{eq:ineq-two}) is
  satisfied.  In the case (b), $T_c^*(H)=T_h(G,H)$, in order to
  recover the non-analyticity point for the homological difference;
  Eq.~(\ref{eq:ineq-two}) is saturated.  The goal of the Conditions is
  to deal with the case (c) which implies $T_c^*(H)\ge T_h(G,H)$; a
  strict inequality would violate Eq.~(\ref{eq:ineq-two}).  In the
  following we assume (c).

  Condition 1 implies that the (negative) curvature of $f_G(K)$ must
  diverge at $K_h=K_c'(G)$, which must be compensated by a divergent
  curvature of $f_{H^*}(K)$ in order to make $\Delta f(K)$ strictly
  convex in this point.  In this case $T_c^*(H)=T_c'(G)$; 
  Eq.~(\ref{eq:ineq-two}) is saturated.

  Condition 2 does the same, since divergent positive curvature of
  $\Delta f(K)$ at $K_h$ can only come from $f_{H^*}(K)$.

  Condition 3 is equivalent by duality to $T_c'(G)\ge T_c^*(H)$, which
  again gives Eq.~(\ref{eq:ineq-two}) since we assumed (c).
\end{proof}

\section{Proof of the lower bound for tension}
\label{app:tension-lower-bound}

On an infinite locally planar transitive graph $\mathcal{G}$, we would
like to prove the following bound for the asymptotic defect tension
(\ref{eq:asymptotic-tau}),
\begin{equation}
  \label{eq:tension-lower-bound-m}
  {d\tau(K)\over dK}\ge 2[m(K)]^2,
\end{equation}
the same inequality as has been previously proved on $\mathbb{Z}^D$ in
Ref.~\onlinecite{Lebowitz-Pfister-1981}.  This inequality is a trivial
consequence of the following Lemma, which gives a version of Eq.~(7)
from Ref.~\cite{Lebowitz-Pfister-1981} suitable to constructing a
bound for the defect tension defined by Eq.~(\ref{eq:defect-tension}).
\begin{restatable}{lemma}{tensionlowerbound}
  \label{th:tension-lower-bound}
  Let $\mathcal{G}=(\mathcal{V},\mathcal{E})$ be a finite transitive
  graph, and $G$ the corresponding vertex-edge incidence matrix with
  $n=|\mathcal{E}|$ columns.  Take a binary vector
  $\mathbf{e}\in\mathbb{F}_2^n$ selecting a set of edges
  $\mathcal{E}_\mathbf{e}\subset\mathcal{E}$ of size
  $|\mathcal{E}_\mathbf{e}|=\wgt(\mathbf{e})$, and a set of vertices
  $\mathcal{A}\subset \mathcal{V}$ of twice the size,
  $|\mathcal{A}|=2\wgt(\mathbf{e})$, such that the graph contains
  edge-disjoint paths connecting each edge to exactly two vertices in
  ${\cal A}$.  Then for the Ising model defined on the same graph, at
  any $K,h\ge0$, the free energy increment
  $\delta_\mathbf{e}(K)\equiv F_\mathbf{e}(G;K,h)-F_\mathbf{0}(G;K,h)$
  associated with the defect $\mathbf{e}$ satisfies
  \begin{equation}
  {d\delta_\mathbf{e}(K)\over dK}\ge  \langle S_i\rangle_\mathbf{0}
\sum_{v\in\mathcal{A}}  \langle S_v\rangle_\mathbf{e},
\label{eq:tension-lower-bound}  
\end{equation}
where the average $\langle S_v\rangle_\mathbf{e}$ is calculated in the
presence of the defect $\mathbf{e}$; by transitivity
$\langle S_i\rangle_\mathbf{0}$ is independent of $i\in\mathcal{V}$.
\end{restatable}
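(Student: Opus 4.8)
The plan is to carry the differential-inequality argument of Lebowitz and Pfister over to an arbitrary transitive graph, using the edge-disjoint paths supplied by the hypothesis in the role played on $\mathbb{Z}^D$ by straight coordinate lines through the seam. First I would differentiate $\delta_\mathbf{e}(K)$ at fixed $h$. Since $Z_\mathbf{e}$ differs from $Z_\mathbf{0}$ only through the signs $(-1)^{e_b}$ multiplying $KR_b$,
\[
\frac{d\delta_\mathbf{e}}{dK}=\sum_{b\in\mathcal{E}}\Bigl(\langle R_b\rangle_\mathbf{0}-(-1)^{e_b}\langle R_b\rangle_\mathbf{e}\Bigr),
\]
a sum of differences of (gauge-invariant) bond energies between the ferromagnetic and the defect model.

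The second step is to show that every summand is non-negative, so that the sum may be restricted to the edges lying on the paths $P_b$, $b\in\mathcal{E}_\mathbf{e}$, which are the only ones entering the claimed bound. I would establish $(-1)^{e_b}\langle R_b\rangle_\mathbf{e}\le\langle R_b\rangle_\mathbf{0}$ (and, similarly, $|\langle S_\mathcal{A}\rangle_\mathbf{e}|\le\langle S_\mathcal{A}\rangle_\mathbf{0}$, which I will also need) by the standard duplicate-variable device: take two independent copies of the model, one carrying the given defect $\mathbf{e}$ and one ferromagnetic, and change variables to $u_v=S_v$, $t_v=S_vS'_v$. Conditionally on $t$, the field $u$ obeys a ferromagnetic Ising model with non-negative couplings $K\bigl(1+(-1)^{e_b}t_{i_b}t_{j_b}\bigr)\in\{0,2K\}$ and field $h(1+t_v)\ge0$, so that $\langle R_b\rangle_\mathbf{0}\pm(-1)^{e_b}\langle R_b\rangle_\mathbf{e}=\langle u_{i_b}u_{j_b}\bigl(1\pm(-1)^{e_b}t_{i_b}t_{j_b}\bigr)\rangle\ge0$, the bracket lying in $\{0,2\}$ and $\langle u_{i_b}u_{j_b}\rangle_{u\mid t}\ge0$ by GKS~(\ref{eq:gks-ineq-one}).

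The third and hardest step is the per-path estimate. Writing $P_b=(v_0,\dots,v_\ell)$ with $v_0,v_\ell\in\mathcal{A}$ and with $\varepsilon_1,\dots,\varepsilon_\ell$ its consecutive edges, exactly one of which (by edge-disjointness) is the flipped edge $b$, I must show the path contribution $\sum_{j}\bigl(\langle R_{\varepsilon_j}\rangle_\mathbf{0}-(-1)^{e_{\varepsilon_j}}\langle R_{\varepsilon_j}\rangle_\mathbf{e}\bigr)$ is at least $\langle S_i\rangle_\mathbf{0}\bigl(\langle S_{v_0}\rangle_\mathbf{e}+\langle S_{v_\ell}\rangle_\mathbf{e}\bigr)$ --- the graph analogue of Eq.~(7) of Ref.~\onlinecite{Lebowitz-Pfister-1981}. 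In the duplicate picture this contribution equals $2\sum_j\langle\chi_j\langle u_{v_{j-1}}u_{v_j}\rangle_{u\mid t}\rangle_t$, where $\chi_j\in\{0,1\}$ indicates that $\varepsilon_j$ is frustrated by $t$ (hence absent from the conditional $u$-coupling graph); the telescoping identity $\prod_jR_{\varepsilon_j}=S_{v_0}S_{v_\ell}$ together with the single flipped edge gives the parity constraint $(-1)^{\sum_j\chi_j}=-t_{v_0}t_{v_\ell}$, forcing at least one frustrated edge on $P_b$ whenever $t_{v_0}=t_{v_\ell}$. Using GKS~(\ref{eq:gks-ineq-two}) in the conditional ferromagnetic model to tie $\langle u_{v_{j-1}}u_{v_j}\rangle_{u\mid t}$ to the one-point functions along $P_b$, and then averaging over $t$ to recover $\langle S_{v_0}\rangle_\mathbf{e}=\langle t_{v_0}\langle u_{v_0}\rangle_{u\mid t}\rangle_t$ and $\langle S_i\rangle_\mathbf{0}=\langle\langle u_i\rangle_{u\mid t}\rangle_t$, should close the estimate; transferring the two-point conditional correlation into the two unconditional magnetizations in a way that survives the $t$-average --- an FKG/monotonicity point --- is where I expect the genuine difficulty.

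Finally, summing the per-path bound over $b\in\mathcal{E}_\mathbf{e}$, edge-disjointness guarantees no edge of $\mathcal{E}$ is counted twice, and since the $2\wgt(\mathbf{e})$ endpoints of the $\wgt(\mathbf{e})$ paths exhaust $\mathcal{A}$, the right-hand sides add up to $\langle S_i\rangle_\mathbf{0}\sum_{v\in\mathcal{A}}\langle S_v\rangle_\mathbf{e}$, with $\langle S_i\rangle_\mathbf{0}$ independent of $i$ by transitivity. This is Eq.~(\ref{eq:tension-lower-bound}); on $\mathbb{Z}^D$, taking $P_b$ to be the coordinate line through the seam bond $b$ recovers the original bound $d\tau/dK\ge2m^2$, so the combinatorial hypotheses on $\mathcal{A}$ and on the paths are exactly the abstraction required.
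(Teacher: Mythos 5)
Your Steps 1 and 2 coincide with the paper's setup (the derivative as a sum of bond-energy differences, and duplicate variables to control the defect averages), but there is a genuine gap at Step 3, which you yourself flag: the per-path estimate is never actually established. The route you sketch --- conditioning on $t=S S'$ in the duplicated system, applying GKS in the conditional ferromagnetic $u$-model, and then trying to push the product structure through the $t$-average --- does not close as stated, because the $t$-average of a product of conditional correlations does not factor into the two unconditional magnetizations $\langle S_{v_0}\rangle_\mathbf{e}$, $\langle S_{v_\ell}\rangle_\mathbf{e}$ that you need, and no straightforward FKG or monotonicity argument on the pair $(u,t)$ supplies that factorization. So the ``genuine difficulty'' you anticipate is precisely the unproved core of the lemma.

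The missing idea is to use the duplicate-variable device you already introduced to prove the \emph{stronger} per-edge statement, namely both sign versions of the comparison inequality (\ref{eq:comparison-ineq}) with singleton sets: $\langle S_iS_j\rangle_\mathbf{0}\pm\langle S_iS_j\rangle_\mathbf{e}\ge\bigl|\langle S_i\rangle_\mathbf{0}\langle S_j\rangle_\mathbf{e}\pm\langle S_i\rangle_\mathbf{e}\langle S_j\rangle_\mathbf{0}\bigr|$; the minus-sign version is Lebowitz's comparison inequality \cite{Lebowitz-1977}, and the plus-sign version follows from the same duplicated-spin computation you wrote down, so Step 2 already contains all the probabilistic content --- you stopped short of extracting the product of one-point functions on the right-hand side instead of mere non-negativity. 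With transitivity, $\langle S_i\rangle_\mathbf{0}=m_0$ for every $i$, each edge then contributes $\langle S_iS_j\rangle_\mathbf{0}-(-1)^{e_b}\langle S_iS_j\rangle_\mathbf{e}\ge m_0\,\bigl|m_i'-(-1)^{e_b}m_j'\bigr|$ with $m_v'\equiv\langle S_v\rangle_\mathbf{e}$, and your per-path bound reduces to a purely deterministic triangle-inequality telescoping along $P_b$: with a single sign flip at the defect edge, $|m'_{a}-m'_{v_1}|+\cdots+|m'_{u}+m'_{w}|+\cdots+|m'_{v_{k}}-m'_{a'}|\ge m'_{a}+m'_{a'}$. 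Summing over the edge-disjoint paths (whose endpoints exhaust $\mathcal{A}$) gives Eq.~(\ref{eq:tension-lower-bound}) directly; no parity bookkeeping, conditional GKS step, or FKG input is required.
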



\begin{proof}  
  The proof is based on two inequalities,
  \begin{equation}
  \langle S_{\cal A}S_{\cal B}\rangle_\mathbf{0}\pm  \langle S_{\cal
    A}S_{\cal B}\rangle_\mathbf{e}\ge\left|  \langle S_{\cal
    A}\rangle_\mathbf{0}
  \langle S_{\cal B}\rangle_\mathbf{e}\pm  \langle S_{\cal
    A}\rangle_\mathbf{e}\langle S_{\cal B}\rangle_0\right|,
\label{eq:comparison-ineq}  
\end{equation}
where ${\cal A}\subset{\cal V}$ and ${\cal B}\subset{\cal V}$ are sets
of vertices.  The inequality with the lower (negative) signs is the
Lebowitz comparison inequality\cite{Lebowitz-1977}, while the
inequality with the upper signs can be proved using the same
technique.  In the case of an Ising model on a graph
${\cal G}=({\cal V},{\cal E})$, we have%
  $$
  {d\delta_\mathbf{e}(K)\over dK}=\sum_{ ij=b\in{\cal E}}\left[
  \langle S_iS_j\rangle_\mathbf{0}-  (-1)^{e_b}\langle
  S_iS_j\rangle_\mathbf{e}\right] .
  $$
  Applying Eq.~(\ref{eq:comparison-ineq}) for each term separately,
  with the help of transitivity, $\langle S_i\rangle_\mathbf{0}\equiv
  m_0\ge0$, $i\in{\cal V}$, one gets 
  \begin{equation}
    {d\delta_\mathbf{e}(K)\over dK}\ge m_0\sum_{b=ij\in{\cal E}}
    |m_i'-(-1)^{e_b}m_j'|,\label{eq:delta-derivative}  
  \end{equation}
  where $m_i'\equiv \langle S_i\rangle_\mathbf{e}$.  The statement of
  the Lemma is obtained by noticing that for a path connecting $1$ and
  $f$,
  $$|m_1'-m_2'|+|m_2'-m_3'|+\ldots +|m'_{f-1}-m'_f|\ge m'_f-m_1',$$
  which allows to trade $\wgt({\bf e})$ terms with $+$ signs in the
  r.h.s.\ of Eq.~(\ref{eq:delta-derivative}) for the sum of
  magnetizations $m_v'$ on the $2\wgt(\mathbf{e})$ vertices from
  ${\cal A}$.
\end{proof}

\bibliography{lpp,qc_all,more_qc,percol,spin,grav,sg,htseries,analiz,teach}

\end{document}